\newcommand\keywords[1]{\textbf{Keywords}: #1}
\newtheorem{theorem}{Theorem}
\newtheorem{lemma}{Lemma}
\newtheorem{definition}{Definition}
\title{\textbf{The Algorithm for Solving Quantum Linear Systems of Equations With Coherent Superposition and Its Extended Applications}}
\author[1,2,3]{Qiqing Xia}
\author[1,2,3]{Qianru Zhu}
\author[4]{Huiqin Xie}
\author[1,2]{Li Yang  \thanks{Corresponding author: yangli@iie.ac.cn}}
\affil[1]{Key Laboratory of Cyberspace Security Defense, Beijing, China}
\affil[2]{Institute of Information Engineering, Chinese Academy of Sciences, Beijing, China}
\affil[3]{School of Cyber Security, University of Chinese Academy of Sciences, Beijing, China}
\affil[4]{ Beijing Electronic Science and Technology Institute, Beijing, China}
\date{ }
\begin{document}

\maketitle

\smallskip
\hrule
\smallskip

\begin{abstract}
    Many quantum algorithms for attacking symmetric cryptography involve the rank problem of quantum linear equations. In this paper, we first propose two quantum algorithms for solving quantum linear systems of equations with coherent superposition and construct their specific quantum circuits. Unlike previous related works, our quantum algorithms are universal. Specifically, the two quantum algorithms can both compute the rank and general solution by one measurement. The difference between them is whether the data register containing the quantum coefficient matrix can be disentangled with other registers and keep the data qubits unchanged. On this basis, we apply the two quantum algorithms as a subroutine to parallel Simon's algorithm (with multiple periods), Grover Meets Simon algorithm, and Alg-PolyQ2 algorithm, respectively. Afterwards, we construct a quantum classifier within Grover Meets Simon algorithm and the test oracle within Alg-PolyQ2 algorithm in detail, including their respective quantum circuits. To our knowledge, no such specific analysis has been done before. We rigorously analyze the success probability of those algorithms to ensure that the success probability based on the proposed quantum algorithms will not be lower than that of those original algorithms. Finally, we discuss the lower bound of the number of CNOT gates for solving quantum linear systems of equations with coherent superposition, and our quantum algorithms reach the optimum in terms of minimizing the number of CNOT gates. Furthermore, our analysis indicates that the proposed algorithms are mainly suitable for conducting attacks against lightweight symmetric ciphers, within the effective working time of an ion trap quantum computer.
\end{abstract}

\keywords{quantum linear systems of equations; parallel Simon's algorithm; Grover Meets Simon algorithm; Alg-PolyQ2 algorithm; quantum circuits; ion trap quantum computer}

\section{Introduction}
Quantum computing fully utilizes the advantages of parallel computing by exploiting quantum phenomena such as quantum superposition and entanglement,
which allows for simultaneous processing of multiple computable states, enhancing efficiency and speed. The unique capabilities of quantum computers allow them to outperform classical computers in some computational tasks. They can solve some problems that are difficult for traditional computers in polynomial time, which significantly impacts the security of many cryptographic schemes nowadays. The well-known quantum algorithm proposed by Shor \cite{shor1994algorithms} allows solving factorization and discrete logarithm problems in polynomial time and can achieve an exponential acceleration effect. Since the security of almost all public key schemes in the current classical cryptosystem relies on this computational assumption that those problems are intractable, Shor's algorithm seriously threatens the security of public key cryptography. The quantum algorithm proposed by Grover \cite{grover1996fast,grover1997quantum} can achieve quadratic acceleration compared to the classical exhaustive search. People thought that this was the only threat to symmetric cryptography. However, with many cryptanalyses relying on Simon's algorithm \cite{kuwakado2010quantum, kaplan2016breaking, bonnetain2018quantum, bonnetain2019quantum}, people changed their minds. Nowadays, the impact of many quantum attacks in symmetric cryptography is still not so clear, and possible potential threats continue to be explored.

The quantum algorithm with exponential speedup proposed by Simon \cite{simon1997power} is the enlightenment of many quantum algorithms, and it is of great significance in symmetric cryptanalysis. However, \cite{beals2001quantum} pointed out that only a problem with a promise can obtain the exponential speedup in advance, and any quantum algorithm with unlimited Boolean functions can only provide higher polynomial speedup over classical deterministic algorithms. Simon's algorithm is often used in cryptanalysis of symmetric cryptography and attacks certain constructions of cipher modes often involved in symmetric cryptography. Kuwakado and Morii first used it to break the 3-round Feistel scheme \cite{kuwakado2010quantum} and then proved that the Even-Mansour construction \cite{kuwakado2012security} was insecure with superposition query. Santoli and Schaffiner extended their results and proposed a quantum forgery attack on the CBC-MAC scheme \cite{santoli2017using}. In \cite{kaplan2016breaking}, Kaplan et al. used Simon's algorithm to attack various symmetric cryptosystems, such as CBC-MAC, PMAC, GMAC, GCM, OCB, etc. At the same time, it was applied to many constructions, such as LRW Construction. Simon's algorithm was also applied in the sliding attack \cite{bonnetain2020quantum} to achieve an exponential acceleration effect. Leander et al. \cite{leander2017grover} combined the quantum algorithm of Grover and Simon to attack FX-Construction for the first time in a cryptographic setting, thereby destroying this construction with whitened keys. In \cite{bonnetain2019quantum}, Bonnetain et al. proposed the Alg-PolyQ2 algorithm to attack FX-Construction, greatly reducing the query complexity compared to Grover Meets Simon algorithm.

The problem of solving linear equations must be involved in the process of using Simon's algorithm. Solving linear systems of equations is the central issue of scientific computing, and the solutions of linear equations in the quantum era need further research. In the quantum setting, Harrow et al. \cite{harrow2009quantum} proposed a quantum algorithm to solve non-homogeneous linear systems of equations with a sparse Hamiltonian coefficient matrix, reducing the time complexity of the classical algorithm from $\mathcal{O}(n) $ to $\mathcal{O}(\log(n))$. Based on the idea of HHL algorithm and quantum simulation algorithm \cite{berry2014exponential}, Childs et al. \cite{childs2017quantum} proposed a new algorithm to avoid the limitation of HHL algorithm phase estimation and exponentially improve the dependence on accuracy. Then, Wossnig et al. \cite{wossnig2018quantum} proposed a new algorithm using the quantum singular value estimation algorithm (QSVE) in \cite{DBLP:conf/innovations/KerenidisP17} to break through the sparse matrix assumption in HHL algorithm and further improve the algorithm for solving quantum linear systems of equations. Subasi et al. \cite{subacsi2019quantum} proposed two quantum algorithms based on adiabatic quantum computing to solve linear equations, further reducing the time complexity.

In the field of cryptography, we often need the accurate solution of linear equations, and solving linear equations is a subroutine in many cryptanalytic algorithms. For cryptanalysis in public-key cryptosystems, the intermediate process of some classical information set decoding schemes often needs to use the Gaussian elimination algorithm to transform the matrix. Perriello \cite{perriello2021complete} and Esser \cite{DBLP:journals/corr/abs-2112-06157} respectively constructed quantum circuits for the information set decoding problem in code-based cryptanalysis algorithms and gave an implementation for solving a system of full-rank quantum linear equations with coherent superposition in the corresponding process. Regarding the cryptanalysis of symmetric cryptosystems, many quantum algorithms for attacking symmetric ciphers involve the problem of quantum linear equations, especially in the process of using Simon's algorithm. Simon's algorithm serves as a subroutine in various quantum cryptanalysis algorithms. It needs to solve the problem of quantum linear equations in the quantum setting. In \cite{bonnetain2020quantum}, it proposed the method of judging whether the quantum linear equations is full rank, which is of great help to many algorithms based on the quantum linear systems of equations. In \cite{bonnetain2019quantum}, it gave two algorithms for asymmetric search of a shift using Simon's algorithm under the Q1 and Q2 models, which involves the rank problem of quantum linear equations. Many cryptographic protocols are also designed based on quantum linear equations, such as quantum key distribution \cite{bennett2014quantum}, quantum authentication \cite{curty2001quantum} and quantum secret sharing \cite{wang2005quantum}, etc. It can be seen that solving quantum linear systems of equations is a significant research problem in cryptography.

\textbf{Our contributions} In this paper, we define quantum linear equations with coherent superposition and propose two quantum algorithms for solving quantum linear equations with coherent superposition in the quantum setting, which are developed from the algorithm in \cite{bonnetain2021quantum}. Then, we modify and extend this original algorithm to obtain different versions and apply them to different quantum symmetric attack algorithms as a subroutine. In more detail, our main contributions are as follows:

\begin{enumerate}
    \item We propose two quantum algorithms, differing in whether the data register containing the quantum coefficient matrix can be disentangled with other registers and keep the data qubits unchanged. Algorithm \ref{Alg:QAFGSAR} is the case where the data register is entangled with other registers, but it can reduce quantum resources. Algorithm \ref{Alg:QAFGSARAV} requires that the data register is disentangle with other registers (the data qubits unchanged before and after performing) and adds about $\mathcal{O}(n^2)$ qubits to store the general solution. Different from the previous quantum algorithms for solving linear equations, the algorithms we propose are universal for solving quantum linear equations with coherent superposition in the quantum setting, which is the same as the effect of using Gaussian elimination to solve any classical linear equations in a classical computer. Our method is similar to the one given by Bonnetain et al. in \cite{bonnetain2021quantum} to solve the quantum circuit of Boolean linear equations, but we can compute the rank and general solution in any case under the condition of unchanged complexity algorithm in the quantum setting, and give a detailed circuit construction. After performing $\mathcal{O}(n^3)$ quantum gates, we can obtain the rank, general solution, and upper triangular matrix by one measurement, and the register containing the solutions can also be used in the subsequent circuits of other quantum algorithms, which makes our method more general.
    
    \item We give three applications involving quantum linear equations as a subroutine, respectively parallel Simon's algorithm \cite{simon1997power} (with multiple periods), Grover Meets Simon algorithm \cite{leander2017grover}, and Alg-PolyQ2 algorithm \cite{bonnetain2019quantum}. First, applying Algorithm \ref{Alg:QAFGSAR} to parallel Simon's algorithm (with multiple periods) can solve the problem by one measurement with the original success probability unchanged (or even higher, close to 1). Similarly, We reconstruct the classifier (in Grover Meets Simon algorithm) as a new quantum classifier by applying Algorithm \ref{Alg:QAFGSAR}, including its detailed quantum circuit. Afterwards, we construct a detailed circuit of the \textbf{test} oracle (in Alg-PolyQ2 algorithm) combined with Algorithm \ref{Alg:QAFGSARAV} (which can compute the rank, even though the original article does not specifically introduce how to compute the rank). Moreover, we also construct the specific quantum circuits of the three applications. The solutions to these problems can be obtained by one measurement using the proposed algorithms. Finally, we discuss the optimality of our algorithms and their applicability to lightweight cryptographic attacks during the effective working time of an ion trap quantum computer.
\end{enumerate}

\textbf{Outline} Section \ref{Preliminaries} introduces some knowledge of linear algebra and the basic principles of quantum computing and quantum circuits. Section \ref{Algorithms} gives the specific definition of quantum linear equations with coherent superposition and proposes two quantum algorithms, including their circuit constructions. Section \ref{Application} applies the proposed quantum algorithms in some quantum symmetric attack cryptographic algorithms and constructs their specific quantum circuits. Section \ref{Discussion and Conclusion} discusses the lower bound of quantum gate resources for quantum linear equations and the applications of some lightweight ciphers in an ion trap quantum computer, then summarizes the whole article.

\section{Preliminaries} \label{Preliminaries}
In this section, we briefly recall some notations and results about linear equations over binary fields and quantum circuits.

\subsection{Gaussian Elimination over Binary Fields}
Among the classical algorithms for solving linear equations, the most common one is the Gaussian elimination algorithm. Here, we introduce Gaussian elimination algorithm over binary fields.

\begin{definition}
\textbf{(Linear Equations over Binary Fields)} Let $A$ be a $m \times n$ matrix, and its elements $a_{i,j} \in \{0,1\}$, then $A$ is called a matrix over binary fields. Given a matrix $A$ and a constant vector $b$, find a vector $x$ such that $Ax=b$. When $b=\textbf{0}$, it is called a homogeneous linear system of equations; when $b \neq \textbf{0}$, it is called a non-homogeneous linear system of equations.
\end{definition}

\begin{definition}
\textbf{(Row Echelon Form over Binary Fields)} Given a $m \times n$ matrix $A$ over binary fields, $A$ is called row echelon form over binary fields if the following conditions are satisfied:
\begin{enumerate}
    \item The first non-zero element of each row is 1, and its column index strictly increases with the increase of the row index (the column index must not be smaller than the row index).
    \item All elements below the first non-zero element of the non-zero row are zero.
    \item Any rows consisting of all zeros appear below the rows with non-zero elements.
\end{enumerate}
It can be written in the following form:
\begin{center}
$\begin{bmatrix} a_{1,1} & a_{1,2}& \cdots &a_{1,r}&a_{1,r+1}& \cdots a_{1,n}
\\0 & a_{2,2} & \cdots &a_{2,r}&a_{2,r+1}& \cdots a_{2,n}\\
\vdots & \vdots& \ddots&\vdots&\vdots&\vdots\\
0 & 0& \cdots &a_{r,r} & a_{r,r+1} & a_{r,n}\\
0 & 0& \cdots & 0 & 0 & 0\\
\vdots & \vdots& \ddots&\vdots&\vdots&\vdots\\
0 & 0  & \cdots & 0 &0&0
\end{bmatrix}_{m\times n}$ 
\end{center}
\end{definition}

\begin{definition}
\textbf{(Row Reduced Form Matrix)} Given a row echelon matrix over binary fields, if the first non-zero elements of its non-zero rows are all 1, and the rest elements in the column where the first element 1 of its non-zero row is located are all 0, the matrix is called a row reduced form matrix.
\end{definition}

\begin{theorem}\label{theorem1}
If the rank of the coefficient matrix $A_{m \times n}$ is equal to the rank of the coefficient augmented matrix $[A|b]_{m \times (n+1)}$ over binary fields, i.e., $rank(A) =rank([A|b])$, then this linear system of equations must have a set of solutions.
\end{theorem}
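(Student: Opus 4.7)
The plan is to prove this via the column-space characterization of matrix rank, which is the cleanest route over any field, including $\mathbb{F}_2$. First I would recall that $\textnormal{rank}(A)$ equals the dimension of the column space $\textnormal{Col}(A) \subseteq \mathbb{F}_2^m$, and similarly for $[A|b]$. Observing that $\textnormal{Col}([A|b]) = \textnormal{Col}(A) + \textnormal{span}\{b\}$, the equality $\textnormal{rank}(A) = \textnormal{rank}([A|b])$ forces $b \in \textnormal{Col}(A)$. By the definition of column span, this yields coefficients $x_1,\ldots,x_n \in \{0,1\}$ with $b = \sum_{j=1}^n x_j a_j$, where $a_j$ is the $j$-th column of $A$. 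Setting $x = (x_1,\ldots,x_n)^T$ then gives $Ax = b$, which is precisely a solution to the system.

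As an alternative presentation that meshes better with the Gaussian elimination framework already introduced in this section, I would instead argue constructively. The plan there is to apply elementary row operations over $\mathbb{F}_2$ simultaneously to $A$ and to $[A|b]$, driving $[A|b]$ into row echelon form. Since row operations preserve rank, the number of non-zero rows in the reduced $A$ is $\textnormal{rank}(A)$ and the number of non-zero rows in the reduced $[A|b]$ is $\textnormal{rank}([A|b])$. If these are equal, then no non-zero row of the reduced $[A|b]$ has the shape $[0,\ldots,0\,|\,1]$, since such a row would contribute to $\textnormal{rank}([A|b])$ without contributing to $\textnormal{rank}(A)$. The absence of such inconsistent rows means that back-substitution succeeds: we assign arbitrary values in $\{0,1\}$ to the free variables and solve uniquely for the pivot variables row by row from the bottom up, producing an explicit $x$ with $Ax = b$.

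The two approaches are essentially dual, but the second is preferable in context because it mirrors the Gaussian elimination procedure that the rest of the paper implements as a quantum circuit, and it simultaneously shows existence together with a concrete construction via row echelon form. I do not foresee a genuine obstacle here: the argument is standard linear algebra and works verbatim over $\mathbb{F}_2$, since the only field property used is that a non-zero pivot (namely $1$) is invertible. The one point that requires a brief remark is that row operations over $\mathbb{F}_2$ (swaps and additions of rows) indeed preserve both the rank of $A$ and the solution set of $Ax=b$ when applied consistently to the augmented matrix; this is immediate because each elementary row operation corresponds to left-multiplication by an invertible matrix over $\mathbb{F}_2$.
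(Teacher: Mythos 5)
Your first argument is essentially the paper's own proof: the paper also reasons via the column space, observing that when $rank(A)=rank([A|b])$ the last column of the augmented matrix must be a linear combination of the first $n$ columns (and, conversely, that $rank(A)<rank([A|b])$ would mean $b$ lies outside the span of the columns of $A$, so no solution exists), which is exactly your statement that the rank equality forces $b\in \mathrm{Col}(A)$ and hence yields coefficients $x_1,\dots,x_n$ with $Ax=b$. Your second, Gaussian-elimination argument is a genuinely different and slightly stronger route that the paper does not take in this proof: by row-reducing $[A|b]$ and noting that equal ranks rule out any row of the form $[0,\dots,0\,|\,1]$, you obtain not merely existence but an explicit solution via back-substitution, and this constructive viewpoint dovetails with the quantum Gaussian--Jordan circuits developed later in the paper (Algorithms 1--3). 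What the paper's (and your first) abstract argument buys is brevity and field-independence with no machinery beyond the definition of rank as column-space dimension; what your second argument buys is an algorithmic witness for the solution, at the modest cost of justifying that elementary row operations over $\mathbb{F}_2$ preserve both rank and the solution set, which you correctly note follows from invertibility of the corresponding elementary matrices. Both arguments are correct; either would serve as a proof of the theorem.
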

\begin{proof}
Let $rank(A)$ is the rank of $A$, $rank([A|b])$ is the rank of $[A|b]$, $n$ is the number of unknown variables, then $rank(A)\leq n$, $rank([A|b])\leq n+1$.

If $rank(A)=rank([A|b])$, then the last column in the augmented matrix $[A|b]$ must be a linear combination of the first $n$ columns, i.e., there is a set of solutions to the linear system of equations, and the theorem holds.

If $rank(A)<rank([A|b])$, i.e., $rank(A)+1=rank([A|b])$, then the rank of $[A|b]$ is equal to the rank of $A$ add the dimension of the subspace spanned by the column vector $b$ in $A$, so $b$ is not in the subspace spanned by the column vectors of $A$, i.e., $b$ cannot be linearly represented by the column vectors of $A$, then the linear system of equations has no solution.

Therefore, when $rank_A = rank_{[A|b]}$, the linear system of equations must have a set of solutions.
\end{proof}

\begin{theorem}\label{theorem2}
If there are $k$ basic solution vectors of the homogeneous linear system of equations $Ax=0$ over binary fields, $x$ has a zero solution and $2^k-1$ non-zero solutions. Then, x has $2^k$ non-zero solutions for non-homogeneous linear system of equations $Ax=b$.
\end{theorem}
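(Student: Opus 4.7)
The plan is to treat the two parts separately, leaning on the standard coset description of the solution set of a linear system. First I would verify the count for the homogeneous case: since $\xi_1,\ldots,\xi_k$ are basic (that is, linearly independent) solutions to $Ax=0$, every solution has the unique form $c_1\xi_1+\cdots+c_k\xi_k$ with $c_i\in\mathbb{F}_2=\{0,1\}$. Because each coefficient has exactly two choices and different coefficient tuples yield different vectors (by linear independence), the solution space contains exactly $2^k$ vectors. The all-zero choice of coefficients produces the zero solution, which leaves $2^k-1$ non-zero solutions, matching the hypothesis.

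Next I would handle the non-homogeneous system $Ax=b$. By Theorem \ref{theorem1}, whenever it is consistent we may fix a particular solution $x_0$ with $Ax_0=b$. Then for every solution $y$ to the homogeneous system, $A(x_0+y)=Ax_0+Ay=b$, so $x_0+y$ is a solution of $Ax=b$; conversely, if $x'$ solves $Ax=b$, then $A(x'-x_0)=0$, so $x'-x_0$ lies in the homogeneous solution space, and $x'=x_0+(x'-x_0)$. The map $y\mapsto x_0+y$ from the homogeneous solution space to the non-homogeneous solution set is thus a bijection, so $Ax=b$ has exactly $2^k$ solutions.

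Finally I would justify that all $2^k$ of these solutions are non-zero, which is the only place where the hypothesis $b\neq\mathbf{0}$ is used. If $\mathbf{0}$ were a solution of $Ax=b$, then $b=A\cdot\mathbf{0}=\mathbf{0}$, contradicting the assumption built into the statement (a non-homogeneous system). Hence none of the $2^k$ elements of the coset $x_0+\ker(A)$ is the zero vector, and the count of non-zero solutions equals $2^k$.

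The argument is essentially routine linear algebra over $\mathbb{F}_2$; the only point requiring care is the translation from the usual ``solution set = particular solution + kernel'' picture over an arbitrary field to the binary setting, where each scalar contributes a factor of $2$ rather than an infinite family, so the counts $2^k-1$ and $2^k$ arise rather than descriptions in terms of dimension. No serious obstacle is anticipated.
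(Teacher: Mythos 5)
Your argument is correct. Note that the paper states this theorem without giving any proof at all, so there is nothing to compare against on the paper's side; your coset argument (solution set of $Ax=b$ equals a particular solution $x_0$ plus the kernel of $A$, the kernel having exactly $2^k$ elements by unique $\mathbb{F}_2$-combinations of the $k$ independent basic solution vectors, and $0$ excluded from the coset since $b\neq\mathbf{0}$) is the standard justification and supplies exactly the missing reasoning. The only point worth flagging is that the theorem implicitly assumes the non-homogeneous system is consistent; you handle this appropriately by invoking Theorem \ref{theorem1} to fix a particular solution, which is all the counting needs.
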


\subsection{Quantum Circuit}
We briefly introduce the relevant knowledge of quantum circuits. A circuit composed of multiple quantum gates with certain logic functions is called a quantum circuit. It is applied to a series of operations of a group of qubits and can be used to describe the change of the quantum state in the two-dimensional Hilbert space. Each quantum logic gate can be represented by a unitary matrix.

A single qubit has two quantum ground states $|0\rangle$,$|1\rangle$. If the quantum state $|\varphi\rangle$ is in a state other than the ground state and can be represented linearly by $|0\rangle$ and $|1\rangle$, then the state is called a superposition state $|\varphi\rangle= \alpha|0\rangle+\beta|1\rangle$. The probability amplitudes $\alpha$ and $\beta$ are complex numbers and satisfy $|\alpha|^2+|\beta|^2=1$.

When given $n$ qubits, the computational basis has $2^n$ vectors. After applying a series of quantum gates to the initial state, the original quantum superposition state is changed. Finally, we measure the quantum system and get some $n$-qubit vectors on the computational ground state to obtain our desired results.

In quantum circuits, the unitary operators in the Hilbert space are all reversible, and it may be necessary to use auxiliary qubits to achieve our desired goal. Typically, we can perform some computations, copy the results to an output register using Controlled-NOT (CNOT) gates, and uncompute (performing the same operation backwards) to restore the initial state of the auxiliary qubits. That is, if there is a computational operation $U$, then the uncomputing operation is a hermitian conjugate transpose operator $U^\dagger$. Since arbitrary quantum gates can be represented by single-qubit quantum gates and CNOT gates, we focus on these quantum gates. We first show single-qubit quantum gates in Figure \ref{fig:SQG}.
\begin{figure}[htbp]
    \centering
    \includegraphics[scale=0.3]{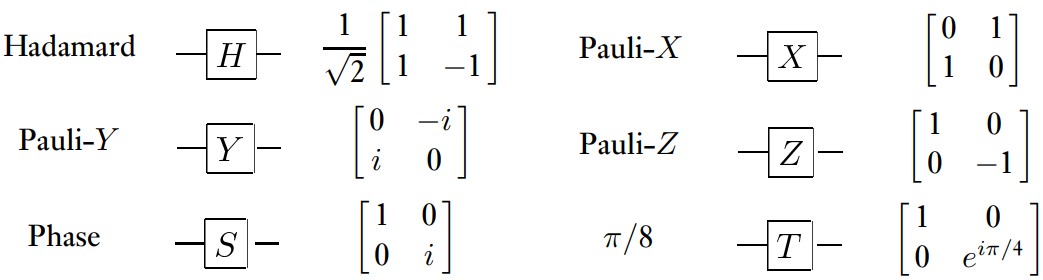}
    \caption{Single-qubit quantum gates}
    \label{fig:SQG}
\end{figure}

In quantum circuits, multi-qubit gates are often used. Common multi-qubit gates include CNOT gates, Toffoli gates, SWAP gates, and Fredkin gates. The XOR gate in the classical circuit can be implemented by the CNOT gate in the quantum logic gates. Similarly, the Toffoli gate can implement the "AND" operation in quantum computing, and it can also be regarded as a double-controlled CNOT gate. The Fredkin gate can be viewed as a controlled SWAP gate. They are shown in Figure \ref{fig:MQG}.
\begin{figure}[htbp]
\centerline{\includegraphics[scale=0.35]{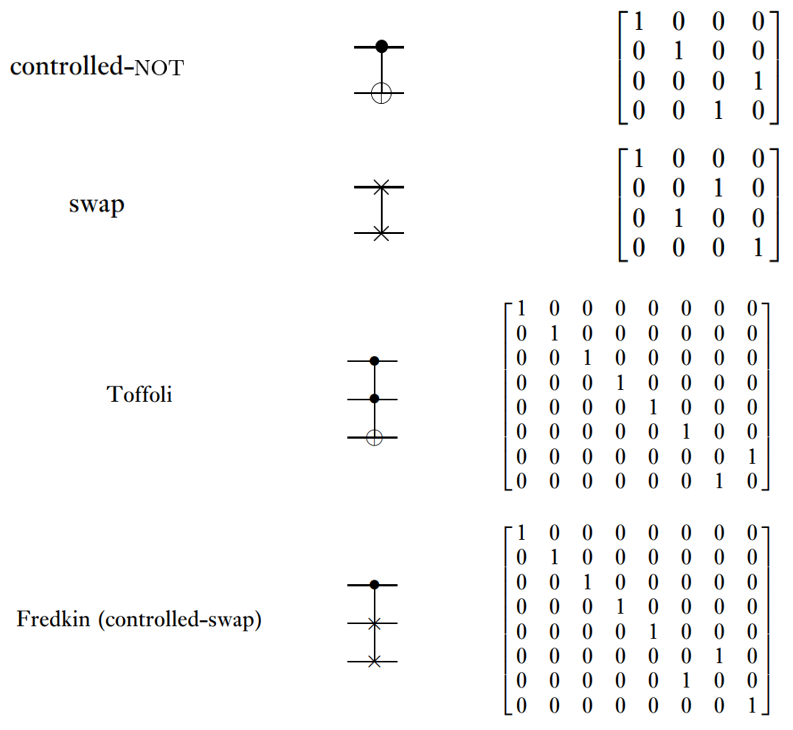}} 
  \caption{Multiple-qubit quantum gates}
   \label{fig:MQG}
\end{figure}

In an ion-trap quantum computer, CNOT gates can only operate serially. Even though different CNOT gates involve different qubits, they cannot operate in parallel \cite{yang2013post}. Therefore, the number of CNOT gates significantly affects the running time of quantum algorithms. When considering the computational complexity of quantum algorithms, we focus on the number of CNOT gates. Since CNOT gates are very critical in quantum computers, in order to calculate the number of CNOT gates in the subsequent article, we decompose a Toffoli gate into single-qubit quantum gates and CNOT gates.

Shende and Markov confirmed \cite{shende2009cnot} that no matter whether the auxiliary system is used or not, at least 6 CNOT gates are needed to implement the Toffoli gate decomposition. A Toffoli gate can be decomposed into six CNOT gates, seven T gates, two H gates, and one S gate. The quantum circuit is as Figure \ref{fig:DOTTG}.
\begin{figure}[htbp]
    \centering
    \includegraphics[scale=0.28]{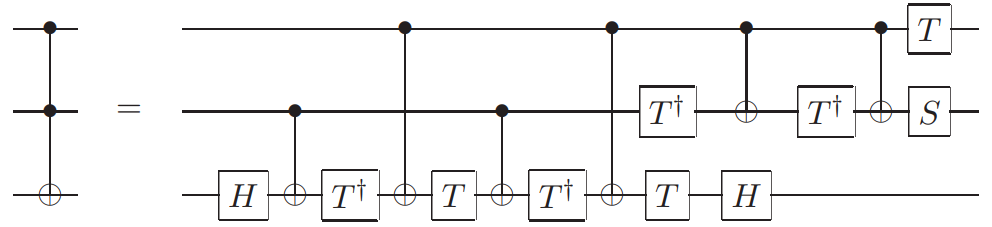}
    \caption{Decomposition of the Toffoli Gate}
    \label{fig:DOTTG}
\end{figure}

In \cite{DBLP:journals/corr/abs-2304-03050}, seven CNOT gates can be used to implement the decomposition of Fredkin gates. A Fredkin gate can be decomposed into seven CNOT gates, seven T gates, two H gates, and three S gates. The quantum circuit is Figure \ref{fig:DOTFG}.
\begin{figure}[htbp]
    \centering
    \includegraphics[scale=0.55]{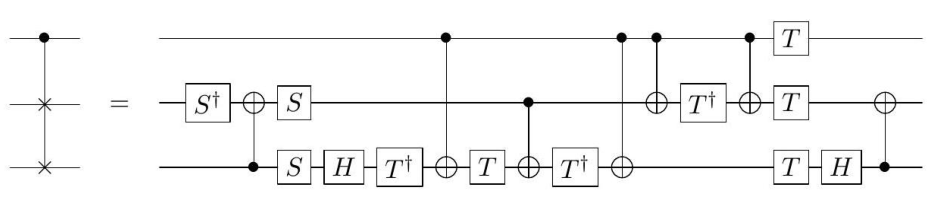}
    \caption{Decomposition of the Fredkin Gate}
    \label{fig:DOTFG}
\end{figure}

Usually, in order to reduce the loss of resources, Toffoli gates are usually used to implement the exchange effect of Fredkin gates, so as to reduce the number of decomposed CNOT gates, such as Algorithm \ref{Alg:QAFGSAR} and Algorithm \ref{Alg:QAFGSARAV} in section \ref{Algorithms}.

\section{Algorithms for Solving Quantum Linear Systems of Equations}  \label{Algorithms}
In quantum linear equations, there is no need to distinguish column pivots and full pivots (because the pivot can only be 1, there is no statement of selecting the largest pivot, and it does not involve precision issues). Only the precise general algorithm needs to be considered, and we analyze it based on the ideas of Gaussian elimination and Gaussian-Jordan elimination.

We divided the section into three subsections. In section 3.1, we first translate the classical Gaussian-Jordan elimination algorithm over binary fields to the equivalent quantum algorithm, so as to analyze the number of quantum gates required. In section 3.2, we propose a quantum algorithm using the quantum data register itself as a storage register, and the number of quantum gates for solving the quantum linear equations with coherent superposition can be polynomially reduced. In section 3.3, we propose another quantum algorithm using a quantum auxiliary register as a storage register. Although the number of auxiliary qubits for solving the quantum linear equations with coherent superposition increases, this method can be disentangled with other registers and keep the quantum data register unchanged.

\begin{definition}
\textbf{(Quantum linear equations with coherent superposition)} We define a quantum state matrix $|O\rangle$ to be the tensor product of $mn$ $|0\rangle$, i.e. $|O\rangle=|0\rangle^{\otimes mn}$. After Hadamard transformation $H^{\otimes mn}$, denote $H^{\otimes mn}|O\rangle=|A\rangle$, then
\begin{equation*}
\begin{aligned}
    |A\rangle &=\left(\frac{1}{\sqrt{2}}(|0\rangle+|1\rangle)\right)^{\otimes mn} =(\frac{1}{\sqrt{2}})^{mn}\sum_{a_{ij}\in \mathbb{F}_2}|a_{11}\cdots a_{1n}a_{21}\cdots a_{2n}\cdots a_{m1}\cdots a_{mn}\rangle.
\end{aligned}
\end{equation*}

If measured, $|A\rangle$ will collapse to a certain classical matrix $A$ over binary fields. It can be written as:
\begin{equation*}
\begin{bmatrix} 
\frac{1}{\sqrt{2}}(|0\rangle+|1\rangle)&\frac{1}{\sqrt{2}}(|0\rangle+|1\rangle)&\cdots&\frac{1}{\sqrt{2}}(|0\rangle+|1\rangle)\\ 
\frac{1}{\sqrt{2}}(|0\rangle+|1\rangle)&\frac{1}{\sqrt{2}}(|0\rangle+|1\rangle)&\cdots&\frac{1}{\sqrt{2}}(|0\rangle+|1\rangle)\\ 
\vdots&\vdots&\ddots&\vdots\\
\frac{1}{\sqrt{2}}(|0\rangle+|1\rangle)&\frac{1}{\sqrt{2}}(|0\rangle+|1\rangle)&\cdots&\frac{1}{\sqrt{2}}(|0\rangle+|1\rangle)
\end{bmatrix}_{m\times n}
\stackrel{\text{Collapse}}{\longrightarrow} 
\begin{bmatrix} 
a_{11}&a_{12}&\cdots&a_{1n}\\ 
a_{21}&a_{22}&\cdots&a_{2n}\\ 
\vdots&\vdots&\ddots&\vdots\\
a_{m1}&a_{m2}&\cdots&a_{mn}
\end{bmatrix}_{m\times n},
\end{equation*} $a_{ij} \in \mathbb{F}_2$, i.e., $a_{ij}$ is 0 or 1. 

Given a  quantum state constant vector
$|b\rangle_{m\times 1} =|b_1\rangle \otimes \cdots \otimes |b_m\rangle$, it can be written as: $\begin{bmatrix} 
|b_1\rangle\\ 
|b_2\rangle\\ 
\vdots\\
|b_m\rangle
\end{bmatrix}_{m\times 1}, b_i \in \mathbb{F}_2.$

If there is a quantum state variable vector $|x\rangle$ that collapses to $x$, such that $Ax=b$, then $|x\rangle$ is called the solution of this quantum linear equations with coherent superposition $|A\rangle |x\rangle=|b\rangle$.
\end{definition}

\subsection{Quantum Gaussian-Jordan elimination for general solution and rank}
\begin{theorem}\label{theorem3}
\textbf{(Quantum Turing completeness theorem \cite{nielsen2010quantum})} A quantum computer can theoretically solve any problem that a classical computer can solve, and it can outperform classical computers in terms of speed when tackling certain problems.
\end{theorem}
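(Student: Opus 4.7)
The plan is to decompose the statement into two independent assertions and prove each separately: (i) the simulation direction, that every classical computation admits a quantum implementation of comparable cost; and (ii) the strict-speedup direction, that there exist problems on which quantum computers provably beat their classical counterparts. The first part is essentially a universality statement, while the second is an existence statement that can be discharged by exhibiting a single witness.

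For the simulation direction I would proceed in three stages. First, I would invoke Bennett's construction to convert an arbitrary classical (possibly irreversible) circuit of size $T$ computing a Boolean function $f$ into a reversible classical circuit of size $\mathcal{O}(T)$ (with at most polynomial space overhead), since any unitary dynamics is intrinsically reversible and I must first eliminate information-destroying gates such as AND and OR. Second, I would appeal to the fact that the Toffoli gate is universal for reversible classical computation, so the reversible circuit can be rewritten as a sequence of Toffoli (and NOT) gates acting on a register of bits together with ancilla. Third, since the Toffoli gate is a legitimate unitary on the computational basis and is already available in the quantum gate set discussed in the preceding subsection (together with the single-qubit NOT), the entire reversible circuit can be executed on a quantum computer initialized in a computational-basis state, and the output bits of $f$ are then obtained by a standard-basis measurement with probability $1$. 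This establishes that anything a classical machine computes, a quantum machine computes with only polynomial overhead.

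For the strict-speedup direction, the plan is simply to point to an explicit separation. I would cite Shor's factoring algorithm, which runs in polynomial time on a quantum computer while no classical polynomial-time algorithm is known, and, for an unconditional query-complexity witness, Grover's algorithm, which achieves the provably optimal $\Theta(\sqrt{N})$ query complexity for unstructured search against the classical $\Theta(N)$ lower bound. Either example demonstrates the existence of a task on which the quantum model is faster, which is all that the theorem's second clause asserts.

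The only mildly delicate step is the first stage of the simulation argument: one must be careful about the garbage qubits produced by the reversible transformation and explain that Bennett's uncomputation trick lets us return the ancillas to $|0\rangle$ so that they can be reused and do not spoil interference in any enclosing quantum computation. Everything else is a direct appeal to standard results collected in the cited reference, so I would keep the exposition at the level of a sketch rather than re-deriving reversible-computation universality in full detail.
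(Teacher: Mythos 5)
The paper does not prove Theorem \ref{theorem3} at all; it is imported by citation from the textbook \cite{nielsen2010quantum}, and your sketch is exactly the standard argument that reference gives: reversible simulation of classical circuits via Bennett's construction and Toffoli/NOT universality for the completeness clause, plus an explicit witness (Grover's unconditional $\Theta(\sqrt{N})$ versus $\Theta(N)$ query separation, or Shor's conditional speedup) for the ``outperform'' clause. Your proposal is correct at the level the statement is made, and your choice of the query-complexity separation as the unconditional witness is the right way to handle the informally stated second clause.
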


According to Theorem \ref{theorem3}, we first translate the classical Gaussian-Jordan elimination Algorithm \ref{Alg:GEFGS} in the Appendix to the equivalent quantum algorithm, as shown in Algorithm \ref{Alg:QGEFGS}. 
\begin{algorithm}[H]
		\caption{\textbf{Quantum Gaussian-Jordan elimination for general solution and rank}}
		\label{Alg:QGEFGS}
		\begin{algorithmic}[1]
			\Require 
		    $|[A|b]\rangle$: quantum coefficient augmented matrix, where $|A\rangle$ is a $m\times n$ matrix 
			\Ensure  $\exists$ $|x\rangle$ collapses into $x$ such that $Ax=b$
            \State   \textbf{if} $m<n$ \textbf{then}
			\State \ \ \ \ add $(n-m)\times (n+1)$ and $n\times (n+1)$ all-zero quantum state below the original matrix
			\State   \textbf{else} 
			\State \ \ \ \ no add $(n-m)\times (n+1)$ but add $n\times (n+1)$ all-zero quantum state below the original matrix
			\State   $Pivot_{11}=0$; $\cdots$; $Pivot_{1n}=0$; $\cdots$; $Pivot_{n1}=0$; $\cdots$; $Pivot_{nn}=0$;
			\State \textbf{for}  $i \leftarrow 1$ $\textbf{to}$ $m$ \textbf{do}
		    \State \ \ \ \  \textbf{for} $j\leftarrow 1 $ $\textbf{to}$ $n$ \textbf{do}
	        \State \ \ \ \ \ \ \ \  $work_1=0$; $\cdots$; $work_{j-1}=0$; $Toffoli(X(a_{i1});X(a_{i2});work_1)$;
	        \State \ \ \ \ \ \ \ \   $Toffoli(X(a_{i3});work_1;work_2)$; $\cdots$; $Toffoli(X(a_{i(j-1)});work_{j-3};work_{j-2})$; 
	        \State \ \ \ \ \ \ \ \ $Toffoli(a_{ij};work_{j-2};work_{j-1})$; $CNOT(work_{j-1};Pivot_{ij})$;
			\State \ \ \ \ \ \ \ \    \textbf{for} $\ell \leftarrow 1$  $\textbf{to}$ $i-1$ and $\ell \leftarrow i+1$  $\textbf{to}$ $m$ \textbf{do}
			\State \ \ \ \ \ \ \ \ \ \ \ \ \ \ \ \   $radd_{\ell j}=0$ ; $Toffoli(Pivot_{ij}; a_{\ell j}; radd_{\ell j})$ ;
		    \State \ \ \ \ \ \ \ \ \ \ \ \ \ \ \ \     $Toffoli(radd_{\ell j};a_{i:};a_{\ell:})$ ; $Toffoli(radd_{\ell j};b_{i};b_{\ell})$ ;
		    \State \ \ \ \ \ \ \ \  $Fredkin(Pivot_{ij};a_{i:},a_{j:})$; $Fredkin(Pivot_{ij};b_{i},b_{j})$;
		    \State   \textbf{for}  $j \leftarrow 1$ $\textbf{to}$ $n$ \textbf{do}
			\State \ \ \ \  $Toffoli(a_{jj};b_{j};b_{(j+n)})$ ; 
			\State \ \ \ \   \textbf{for}  $i \leftarrow 1$ $\textbf{to}$ $j-1$ and $i \leftarrow j+1$ $\textbf{to}$ $n$  \textbf{do}
			\State \ \ \ \ \ \ \ \ $X(a_{jj})$ ; $Toffoli(a_{jj};a_{ij};a_{(i+n)j})$ ; $X(a_{jj})$ ;
			\State \ \ \ \ $X(a_{jj})$ ; $CNOT(a_{jj};a_{(j+n)j})$; $X(a_{jj})$ ;
		    \State   \textbf{for}  $j \leftarrow 1$ $\textbf{to}$ $n$ \textbf{do}
			\State \ \ \ \  $solution_j=0$ ;
			\State \ \ \ \  \textbf{for}  $h \leftarrow 1$ $\textbf{to}$ $n$ \textbf{do}
			\State \ \ \ \ \ \ \ \ $k_{h}=0$ ; $Toffoli(H(k_{h});a_{(j+n)h};solution_j)$ ;
			\State \ \ \ \  $CNOT(b_{j+n};solution_j)$ ;
			\State \ \ \ \  $x_j=solution_j$ ;
			\State  \textbf{return} $x=(x_1,\cdots,x_n)$ ; $rank(A)=count(a_{jj}==1)$
		\end{algorithmic}
\end{algorithm}
\noindent Note: $X$ denotes the NOT gate; $H$ denotes the Hadamard gate; $CNOT(a;b)$ denotes $b$ $\oplus$$=a$; $Toffoli(a;b;c)$ denotes $c$ $\oplus$$=ab$; $Fredkin(a;b,c)$ indicates that when $a=1$, $swap(b;c)$. $a_{i:}$ represents all elements (the quantum state of 0 or 1) in the $i$-th row. $work_j$ indicates the auxiliary qubits in the process of finding the pivot (when $j=1$, perform $CNOT(a_{i1};Pivot_{i1})$; when $j=2$, perform $Toffoli(X( a_{i1});a_{i2};Pivot_{i2})$; when $j\geq 3$, perform steps 8-10), $Pivot_{ij}$ indicates the auxiliary qubits that are used to judge whether it is the pivot column for each column of each row, $k_h$ indicates the auxiliary qubits that are used to construct the coefficients of the basic solution system, $radd$ and $solution$ indicate the auxiliary qubits that need to be used in the row addition and solution process respectively.

Brief description of Algorithm \ref{Alg:QGEFGS}: If $m<n$, add $(n-m)\times(n+1)$ zero quantum states below the original augmented matrix for Gaussian elimination transformation and a $n\times (n+1)$ all-zero matrix for storing the basic solution system and special solution. First traverse by row, then traverse by column, and use Toffoli gates to find the pivot in each row. If $a_{ij}$ is the first element with 1 in the $i$-th row, mark it as the pivot with a CNOT gate and store it in auxiliary qubits $Pivot_{ij}$; use the $i$-th row to eliminate the row where the element is 1 (except the pivot) in the pivot column. Using $Pivot_{ij}$ as control qubits, exchange the elements of the $i$-th row and the $j$-th row such that the pivot is on the main diagonal. After traversing all the rows and columns, if a column is the pivot column, then the data qubit $a_{jj}$ is 1; otherwise, $a_{jj}$ is 0. Continue to traverse by column, if the $j$-th column is the column where the pivot is located, then use the Toffoli gates to XOR $b_j$ to $b_{j+n}$ ($j=1,\cdots,n$), and store it in the special solution; otherwise, use the Toffoli gates to XOR $a_{ij }$ to $a_{(i+n)j}$ ($i,j=1,\cdots,n$) and assign $a_{(j+n)j}$ to 1, then store them in the basic solution system. Then the coefficient before each vector in the basic solution system is obtained to be 0 or 1 by using the Hadamard transformation. By measuring, $x_j=b_j+k_1a_{(j+n)1}+\cdots+k_na_{(j +n)n}$.

We analyze the number of quantum gates needed by quantum Gaussian-Jordan elimination algorithm for solving quantum linear equations. Steps 8-10 need $m(n-1)$ CNOT gates, step 19 needs $n$ CNOT gates, step 24 needs $n$ CNOT gates, the total number of CNOT gates is $mn+2n-m$; steps 8-10 need $\frac{(n-1)mn}{2}$ Toffoli gates, step 12 needs $(m-1)mn$ Toffoli gates, step 13 needs $(m-1)mn(n+1)$ Toffoli gates, step 16 needs $n$ Toffoli gates, step 18 needs $(n-1)n$ Toffoli gates, step 23 needs $n^2$ Toffoli gates, the total number of Toffoli gates is $\frac{2m^2n^2+4m^2n-mn^2+4n^2-5mn}{2}$; step 14 needs $mn(n+1)$ Fredkin gates. Therefore, the number of CNOT gates required is $\mathcal{O}(mn)$; the number of Toffoli gates required is $\mathcal{O}(m^2n^2)$; the number of Fredkin gates required is $\mathcal{O}(mn^2)$.

\subsection{Quantum algorithm for general solution and rank}
In this section, we propose a quantum algorithm that given m n-qubit vectors as input, computes the rank of its span and general solutions. The algorithm can polynomially reduce the number of quantum gates compared with Algorithm \ref{Alg:QGEFGS}.

This algorithm needs to add $n\times (n+1)$ zero quantum states as auxiliary qubits above the original quantum state matrix to form a $(m+n)\times (n+1)$ quantum state matrix, which does not change the rank and solution of the original quantum linear system of equations. We can use the original matrix only as the quantum storage register to complete the process of solving the general solution and rank with this quantum algorithm. We use the symbols $tag_i$ and $mark_j$ (which can be regarded as auxiliary qubits). $tag_i$ indicates whether the $i$-th row in the original quantum state matrix is added to the $j$-th row of the all-zero matrix; $mark_j$ indicates whether the $j$-th column is the pivot column or the free variable column after adding the zero quantum states as auxiliary qubits. 

When $m<n$, the $n\times (n+1)$ matrix added above the original matrix is used to store the quantum state matrix $|U\rangle$ after being transformed into an upper triangle; $(n-m)\times (n+1)$ matrix $O$ below the original matrix is stored in the same register as the original matrix. According to Theorem \ref{theorem1}, considering the case where there are solutions, we analyze from the coefficient augmented matrix (ignoring the amplitude here), then the coefficient augmented matrix can be transformed into the form as in formula (\ref{eq1}).
 \begin{equation}
 \label{eq1}
|[A|b]\rangle_{m \times (n+1)} \stackrel{\text{Add zero qubits}}{\longrightarrow} |A'\rangle=\begin{bmatrix} 
|0\rangle&|0\rangle&\cdots&|0\rangle&|0\rangle\\ 
\vdots&\vdots&\ddots&\vdots&\vdots \\
|0\rangle&|0\rangle&\cdots&|0\rangle&|0\rangle\\
|a_{(n+1)1}\rangle&|a_{(n+1)2}\rangle&\cdots&|a_{(n+1)n}\rangle&|b_1\rangle\\
\vdots&\vdots&\ddots&\vdots&\vdots\\
|a_{(m+n)1}\rangle&|a_{(m+n)2}\rangle&\cdots&|a_{(m+n)n}\rangle&|b_m\rangle\\
|0\rangle&|0\rangle&\cdots&|0\rangle&|0\rangle\\ 
\vdots&\vdots&\ddots&\vdots&\vdots\\
|0\rangle&|0\rangle&\cdots&|0\rangle&|0\rangle\\
\end{bmatrix}_{2n\times (n+1)} 
\begin{matrix}
\square &  |mark_1\rangle \\
\vdots &\\
\square &\ |mark_n\rangle \\
\square &\ |tag_1\rangle \\
\vdots &\\
\square &\ |tag_m\rangle 
 &  \\
 &  \\
 &  \\
 &  \\
\end{matrix}
\end{equation}

When $m\geq n$, only need to add $n\times (n+1)$ quantum states $O$ above the original matrix to store the quantum state matrix $|U\rangle$ transformed into the upper triangle, and its coefficient augmented matrix is transformed into the form as in formula (\ref{eq2}).
\begin{equation}
\label{eq2}
|[A|b]\rangle_{m \times (n+1)} \stackrel{\text{Add zero qubits}}{\longrightarrow} |A''\rangle=\begin{bmatrix} 
|0\rangle&|0\rangle&\cdots&|0\rangle&|0\rangle\\ 
\vdots&\vdots&\ddots&\vdots&\vdots\\
|0\rangle&|0\rangle&\cdots&|0\rangle&|0\rangle\\
|a_{(n+1)1}\rangle&|a_{(n+1)2}\rangle&\cdots&|a_{(n+1)n}\rangle&|b_1\rangle\\
\vdots&\vdots&\ddots&\vdots&\vdots\\
|a_{(m+n)1}\rangle&|a_{(m+n)2}\rangle&\cdots&|a_{(m+n)n}\rangle&|b_m\rangle\\
\end{bmatrix}_{(m+n)\times (n+1)} 
\begin{matrix}
\square &\ |mark_1\rangle \\
\vdots &\\
\square &\ |mark_n\rangle \\
\square &\ |tag_1\rangle \\
\vdots &\\
\square &\ |tag_m\rangle 
\end{matrix}
\end{equation}

In the above two formulas, $|a_{ij}\rangle$ in the original matrix corresponds to $|a_{(i+n)j}\rangle$ in the new matrix, $a_{ij} \in \mathbb{F}_2$ ; $|b_{i}\rangle$ in the original matrix corresponds to $|b_{(i+n)}\rangle$ in the new matrix, $b_{i} \in \mathbb{F}_2$. Initialize $tag_i$ and $mark_j$, set all $tag_i$ to 0 and all $mark_j$ to 1. The detailed quantum algorithm is as shown in Algorithm \ref{Alg:QAFGSAR}.

Brief description of Algorithm \ref{Alg:QAFGSAR}: add $n \times (n+1)$ zero quantum states above the original matrix, when $m<n$, the $(n-m)\times (n+1)$ zero quantum states need to be added to the original matrix register; when $m \geq n $, there is no need to add the zero quantum states. After traversing all the rows and columns, store the basic solution system and special solution in the original matrix register. The first $n \times n$ qubits store the basic solution system, and $n \times 1$ qubits store the special solution. Use the intermediate variables $tag_{i-n}$ (initialized to 0) and $mark_j$ (initialized to 1) for marking. Traverse the elements of the $j$-th column, if $a_ {ij}$ ($i=n+1, \cdots, 2n$) is the first element with 1 in the $j$-th column, then update the value of $tag_{i-n}$ to 1, the value of $mark_j$ to 0, and add the $i$-th row to the $j$-th row to achieve swapping with Toffoli gates; continue to find the element $a_{\ell j}$ ($\ell=1,\cdots,j-1, n+1,\cdots,m+n$) of the $1$-th to the $(j-1)$-th row and the $(n+1)$)-th to the $(m+n)$-th row, $tag_{i-n}$ remains unchanged and is still 0, $mark_j$ is still 0 after updating; use the $j$-th row to eliminate the row where the element in the $\ell$-th column is 1 with Toffoli gates. Judge by the updated value of $mark_j$, if the value of $mark_j$ is 1, then this column is the column where the free variable is located. Add this column to the $j$-th column of the register where the original matrix is located, i.e., the $(n+1)$-th row to the $(2n)$-th row of the $j$-th column of the new matrix, and assign $a_{(j+n)j}$ to 1; if the value of $mark_j$ is 0, then this column is the column where the pivot is located, and the $b_j$ is the special solution corresponding to the $j$-th row. This special solution is added to the $(n+1)$-th element of the $(n+j)$-th row of the new matrix. The $(n+1)$-th to $(2n)$-th rows of the new matrix correspond to the solutions $x_1,\cdots,x_n$ respectively, and the auxiliary qubits $|k_h\rangle$ are subjected to Hadamard transformation, so that the coefficient of the basic solution system is 0 or 1, then add its row to the auxiliary qubits $|solution\rangle$.

\begin{algorithm}[htb!]
		\caption{\textbf{Quantum algorithm for general solution and rank}}
		\label{Alg:QAFGSAR}
		\begin{algorithmic}[1]
			\Require 
		    $|[A'|b]\rangle$ ($|[A''|b]\rangle$) is a coefficient augmented matrix belongs to $\mathbb{F}^{m\times(n+1)}_2$, where $|A'\rangle$ is a $(2n)\times (n+1)$ matrix and $|A''\rangle$ is a $(m+n)\times (n+1)$ matrix 
			\Ensure  $\exists$ $|x\rangle$ collapses into $x$ such that $Ax=b$
			\State   \textbf{if} $m<n$ \textbf{then}
			\State \ \ \ \  \textbf{Operate} $|A'\rangle$
            \State \ \ \ \  $mark_1=X(0)$; $\cdots$ ; $mark_n=X(0)$;
	        \State \ \ \ \   $tag_1=0$;  $\cdots$ ; $tag_m=0$;
			\State \ \ \ \  \textbf{for} $j \leftarrow 1$ $\textbf{to}$ $n$ \textbf{do}
			\State \ \ \ \ \ \ \ \ \textbf{for}  $i \leftarrow n+1$ $\textbf{to}$ $m+n$ \textbf{do}
			\State \ \ \ \ \ \ \ \ \ \ \ \ $Toffoli(a_{ij};mark_j;tag_{i-n})$;$Toffoli(a_{ij};tag_{i-n};mark_j)$;
			\State \ \ \ \ \ \ \ \ \ \ \ \  
            $Toffoli(tag_{i-n};a_{i:};a_{j:})$;  $Toffoli(tag_{i-n};b_{i};b_{j})$;
			\State \ \ \ \ \ \ \ \ \textbf{for}  $\ell \leftarrow 1$ $\textbf{to}$ $j-1$ and $\ell \leftarrow n+1$ $\textbf{to}$ $m+n$  \textbf{do}
			\State \ \ \ \ \ \ \ \ \ \ \ \ $radd_{\ell j}=0$ ;  $CNOT(a_{\ell j};radd_{\ell j})$;
			\State \ \ \ \ \ \ \ \ \ \ \ \  $Toffoli(radd_{\ell j};a_{j:};a_{\ell:})$ ; $Toffoli(radd_{\ell j};b_{j};b_{\ell})$ ;
			\State \ \ \ \ \ \ \ \   \textbf{for}  $k \leftarrow n+1$ $\textbf{to}$ $n+j-1$ and $k \leftarrow n+j+1$ $\textbf{to}$ $2n$ \textbf{do}
			\State \ \ \ \ \ \ \ \ \ \ \ \ $Toffoli(mark_{j};a_{(k-n)j};a_{kj})$ ;
			\State \ \ \ \ \ \ \ \  $CNOT(mark_{j};a_{(j+n)j})$ ;
			\State \ \ \ \ \ \ \ \  $X(mark_{j})$  ; $Toffoli(mark_{j};b_{j};b_{j+n})$ ; $X(mark_{j})$  ;
			\State \ \ \ \  \textbf{for}  $j \leftarrow 1$ $\textbf{to}$ $n$ \textbf{do}
			\State \ \ \ \ \ \ \ \ $solution_j=0$ ;
			\State \ \ \ \ \ \ \ \ \textbf{for}  $h \leftarrow 1$ $\textbf{to}$ $n$ \textbf{do}
			\State \ \ \ \ \ \ \ \ \ \ \ \ $k_{h}=0$ ; $Toffoli(H(k_{h});a_{(j+n)h};solution_j)$ ;
			\State \ \ \ \ \ \ \ \  $CNOT(b_{j+n};solution_j)$ ;
			\State \ \ \ \ \ \ \ \ $x_j=solution_j$ ;
			\State   \textbf{else} 
			\State \ \ \ \  \textbf{Operate} $|A''\rangle$
            \State \ \ \ \  repeat steps 3-21
			\State  \textbf{return} $x=(x_1,\cdots,x_n)$ ; $rank(A)=count(mark_j==0)$
		\end{algorithmic}
\end{algorithm}

\noindent Note: "$k \leftarrow n+j+1$ $\textbf{to}$ $2n$" can be omitted in step 12, but it does not affect the order of magnitude of the number of quantum gates. Symbol description is the same as Algorithm \ref{Alg:QGEFGS}.

Its quantum circuit is as shown in Figure \ref{fig:QCFSLE}:
\begin{figure}[H]
    \centering
    \begin{subfigure}{0.9\linewidth}
        \includegraphics[scale=0.2]{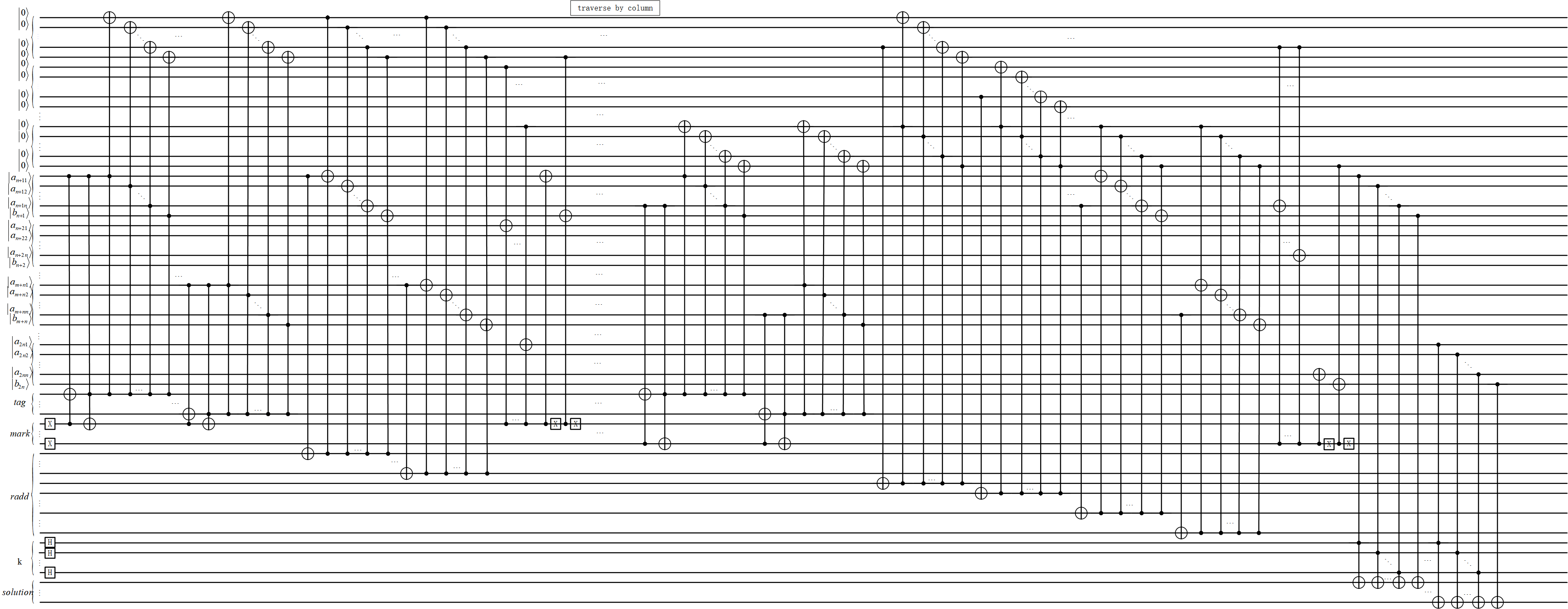}
        \caption{Quantum circuit for solving linear equation when $m<n$.}
        \label{fig:QCFSLE_a}
    \end{subfigure}
    \begin{subfigure}{0.9\linewidth}
        \includegraphics[scale=0.2]{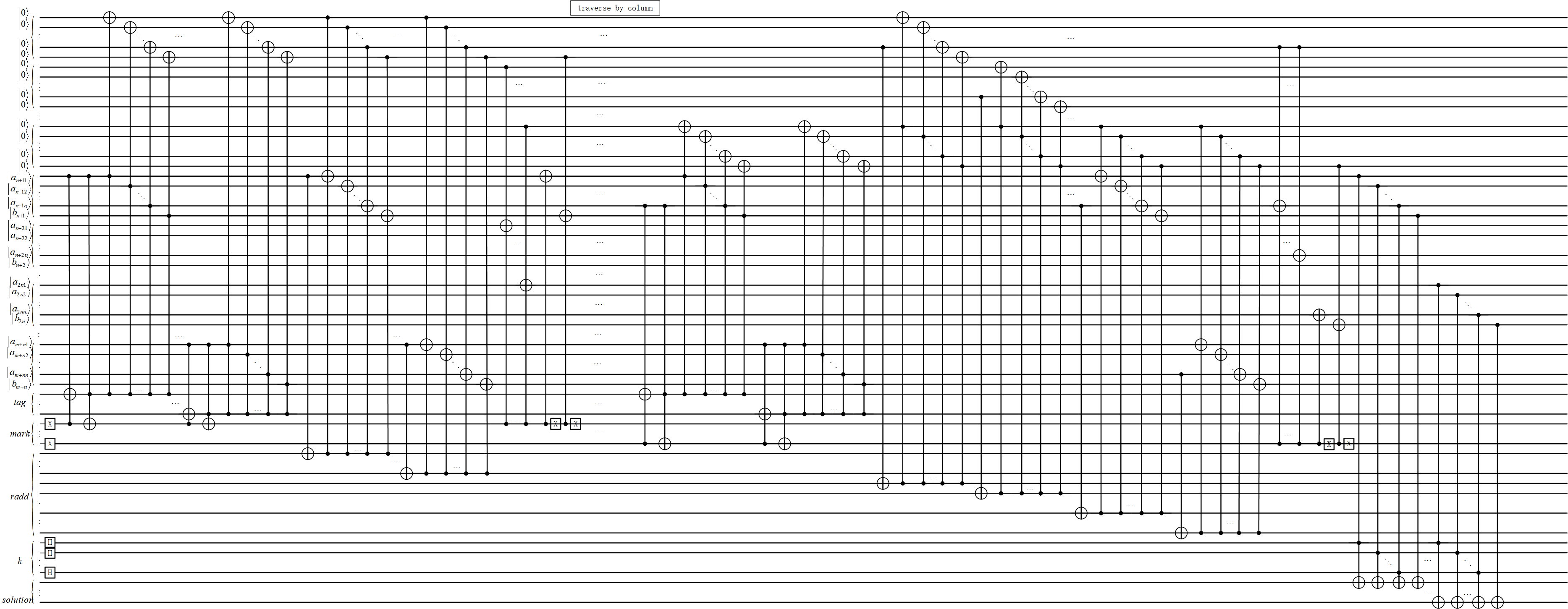}
        \caption{Quantum circuit for solving linear equation when $m\geq n$.}
        \label{fig:QCFSLE_b}
    \end{subfigure}
    \caption{Quantum circuit for solving linear equation }
    \label{fig:QCFSLE}
\end{figure}

We analyze the number of quantum gates needed by the quantum algorithm Algorithm \ref{Alg:QAFGSAR} for solving the quantum linear equations. Step 10 needs $mn+\frac{n(n-1)}{2}$ CNOT gates, step 14 needs $n$ CNOT gates, step 20 needs $n$ CNOT gates, the total number of CNOT gates is $\frac{2mn+n^2+3n}{2}$; step 7 needs $2mn$ Toffoli gates, step 8 needs $mn(n+1)$ Toffoli gates, step 11 needs $mn(n+1)+\frac{n(n-1)(n+1)}{2}$ Toffoli gates, step 13 needs $n(n-1)$ Toffoli gates, step 15 needs $n$ Toffoli gates, step 19 needs $n^2$ Toffoli gates, the total number of Toffoli gates is $\frac{4mn^2+n^3+8mn+4n^2-n}{2}$. Therefore, the number of CNOT gates required is $\mathcal{O}(\frac{2mn+n^2}{2})$; the number of Toffoli gates required is $\mathcal{O}(\frac{4mn^2+n^3}{2})$.

More visually, we do not write other registers. when $m<n$, $|A'\rangle$ is operated, and $|A'\rangle$ will become the following form after Algorithm \ref{Alg:QAFGSAR}:
\begin{equation*}
    \begin{aligned}
      |A'\rangle =
        \begin{bmatrix} 
            |O\rangle_{n\times n}&|O\rangle_{n\times 1}\\ 
            |\frac{A_{m\times n}}{O_{(n-m)\times n}}\rangle_{n \times n}&|\frac{b_{m\times 1}}{O_{(n-m)\times 1}}\rangle_{n\times 1}\\ 
        \end{bmatrix}_{2n \times (n+1)} 
    \stackrel{\text{Algorithm 2}}{\longrightarrow}
        \begin{bmatrix} 
            |U\rangle_{n\times n}&|b'\rangle_{n\times 1}\\ 
            |\eta\rangle_{n\times n}&|b'\rangle_{n\times 1}
        \end{bmatrix}_{2n\times (n+1)},   
    \end{aligned}
\end{equation*}
where $|\frac{A_{m\times n}}{O_{(n-m)\times n}}\rangle_{n \times n}$ indicates that the all-zero quantum state matrix $|O\rangle_{(n-m)\times n}$ is added below the original matrix $|A\rangle_{m \times n}$, $|\frac{b_{m\times 1}}{O_{(n-m)\times 1}}\rangle_{n\times 1}$ indicates that all-zero vector $|O\rangle_{(n-m)\times 1}$ is added below the constant vector $|b\rangle_{m \times 1}$. $|b'\rangle_{n\times 1}$ is a special solution vector, $|\eta\rangle_{n\times n}$ is a basic solution system.

When $m\geq n$, $|A''\rangle$ is operated, and $|A''\rangle$ will become the following form after Algorithm \ref{Alg:QAFGSAR}:
\begin{equation*}
    \begin{aligned}
|A''\rangle =
\begin{bmatrix} 
|O\rangle_{n\times n}&|O\rangle_{n\times 1}\\ 
|A\rangle_{m \times n}&|b\rangle_{m\times 1}\\ 
\end{bmatrix}_{(m+n) \times (n+1)} 
\stackrel{\text{Algorithm 2}}{\longrightarrow}
\begin{bmatrix} 
|U\rangle_{n\times n}&|b'\rangle_{n\times 1}\\ 
|\frac{\eta_{n\times n}}{O_{(m-n)\times n}}\rangle_{m \times n}&|\frac{b'_{n\times 1}}{O_{(m-n)\times 1}}\rangle_{m\times 1}\\ 
\end{bmatrix}_{(m+n)\times (n+1)}, 
    \end{aligned}
\end{equation*}
where $|\frac{\eta_{n\times n}}{O_{(m-n)\times n}}\rangle_{m \times n}$ indicates that the all-zero quantum state matrix $|O\rangle_{(m-n)\times n}$ is added below the basic solution system $|\eta\rangle_{n\times n}$, $|\frac{b'_{n\times 1}}{O_{(m-n)\times 1}}\rangle_{m\times 1}$ indicates that the all-zero vector $|O\rangle_{(m-n)\times 1}$ is added below the special solution vector $|b'\rangle_{n \times 1}$. $|U\rangle_{n\times n}$ is an upper triangular quantum state matrix, $|b'\rangle_{n\times 1}$ a special solution vector.

Its universal quantum circuit is as shown in Figure \ref{fig:GQAFGSAR}.
\begin{figure}[H]
    \centering
    \includegraphics[scale=0.28]{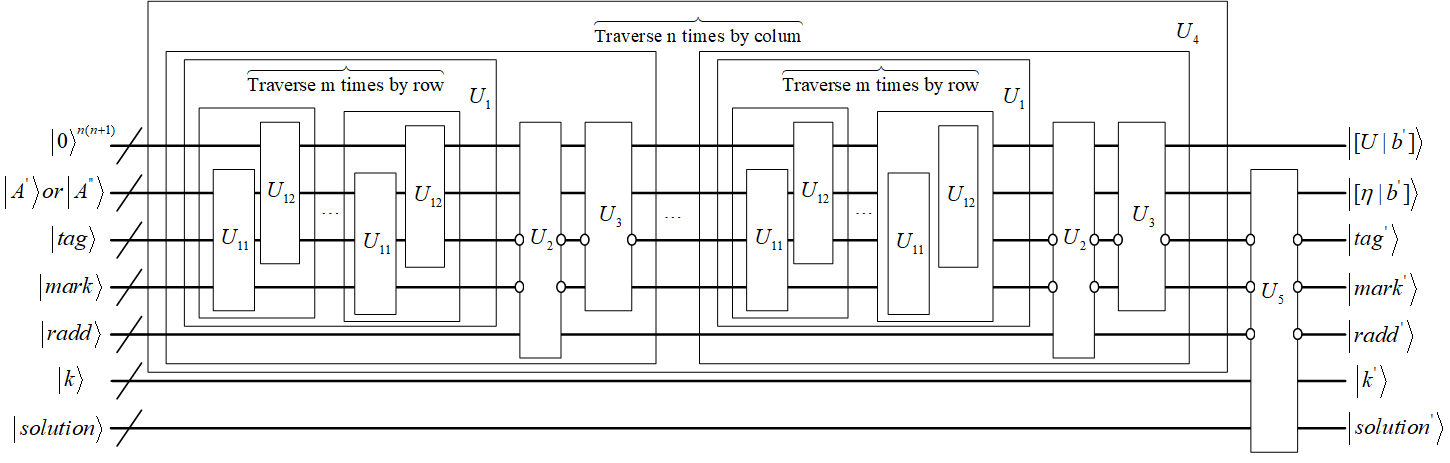}
    \caption{Universal quantum circuit for general solution and rank}
    \label{fig:GQAFGSAR}
\end{figure}

Note that $|A'\rangle_{2n\times (n+1)}$ and $|A''\rangle_{(m+n)\times (n+1)}$ are operated respectively when $m<n$ and $m \geq n$. $U_i$ means unitary transformation, and the registers are entangled with eacn other after $U_i$. "$\circ$" means not to undergo this unitary transformation. In Algorithm \ref{Alg:QAFGSAR}, for each column, step 7 can be represented by $U_{11}$, step 8 can be represented by $U_{12}$, steps 6-8 can be represented by $U_1$, judging whether the row is added and whether the column is the pivot column; steps 9-11 can be represented by $U_2$, eliminating the rest elements with 1 in the pivot column; steps 12-15 can be represented by $U_3$, judging whether the colum is a free variable column or a pivot column and operating respectively. Steps 5-15 can be represented by $U_4$, storing the general solution; steps 16-21 can be represented by $U_5$, obtaining the final solution (by last measurement).

This quantum algorithm cannot disentangle the data register containing $|A'\rangle$ or $|A''\rangle$ with other auxiliary registers since the data register is used as control qubits, and the unitary operation $U_5$ is performed. But auxiliary qubits can be saved, and they can be directly used in the circuits of some quantum algorithms for solving linear equations. For example, it can be directly applied to the circuit of parallel Simon's algorithm (in section \ref{Application}). However, in the case where we need to obtain the solutions of the quantum linear equations by one measurement at last, the data register can be used as a single input and output, and the data register is disentangled with other auxiliary registers at this time.

\subsection{Quantum algorithm for general solution and rank (Another Version)}
We propose another algorithm that uses an auxiliary register as a storage register. This algorithm needs to add $n\times (n+1)$ zero quantum states as auxiliary qubits above and below the original quantum state matrix to form a $(m+2n)\times (n+1)$ quantum state matrix, which does not change the rank and solution of the original quantum linear equations. Similarly, the symbols $tag_i$ and $mark_j$ are also given. The difference from Algorithm \ref{Alg:QAFGSAR} is that we use the $n\times (n+1)$ zero quantum states added below as the storage register for storing the general solution. It can completely disentangle the original quantum coefficient matrix register with other registers and keep data qubits unchanged. According to Theorem \ref{theorem1}, considering the case where there are solutions, we analyze from coefficient augmented matrix (ignoring the amplitude), the transformation form is as shown in formula (\ref{eq3}), where $|a_{ij}\rangle$ and $|b_{i}\rangle$ of the original matrix correspond to $|a_{(i+n)j}\rangle$ and $|b_{( i+n)}\rangle$, respectively.
 \begin{equation}
 \label{eq3}
|[A|b]\rangle_{m \times (n+1)} \stackrel{\text{Add zero qubits}}{\longrightarrow} |A'''\rangle=\begin{bmatrix} 
|0\rangle&|0\rangle&\cdots&|0\rangle&|0\rangle\\ 
|0\rangle&|0\rangle&\cdots&|0\rangle&|0\rangle\\ 
\vdots&\vdots&\ddots&\vdots&\vdots\\
|0\rangle&|0\rangle&\cdots&|0\rangle&|0\rangle\\
|a_{(n+1)1}\rangle&|a_{(n+1)2}\rangle&\cdots&|a_{(n+1)n}\rangle&|b_{n+1}\rangle\\
\vdots&\vdots&\ddots&\vdots&\vdots\\
|a_{(m+n)1}\rangle&|a_{(m+n)2}\rangle&\cdots&|a_{(m+n)n}\rangle&|b_{n+m}\rangle\\
|0\rangle&|0\rangle&\cdots&|0\rangle&|0\rangle\\ 
|0\rangle&|0\rangle&\cdots&|0\rangle&|0\rangle\\ 
\vdots&\vdots&\ddots&\vdots&\vdots\\
|0\rangle&|0\rangle&\cdots&|0\rangle&|0\rangle\\
\end{bmatrix}_{(m+2n)\times (n+1)} 
\begin{matrix}
\square & \ |mark_1\rangle \\
\square & \ |mark_2\rangle \\
\vdots &\\
\square & \ |mark_n\rangle \\
\square & \ |tag_1\rangle \\
\vdots &\\
\square & \ |tag_m\rangle 
 &  \\
 &  \\
 &  \\
 &  \\
 &  \\
\end{matrix}
\end{equation}

The $n\times (n+1)$ matrix added above the original matrix is used to store the quantum state matrix $|U\rangle$ after the upper triangle; the $n\times (n+1)$ matrix added below the original matrix is used to store the general solution. The first $n\times n$ quantum qubits store the basic solution system, and the latter $n\times 1$ quantum qubits store the special solution. The detailed quantum algorithm is as shown in Algorithm \ref{Alg:QAFGSARAV}.

\begin{algorithm}[H]
		\caption{\textbf{Quantum algorithm for general solution and rank (Another Version)}}
		\label{Alg:QAFGSARAV}
		\begin{algorithmic}[1]
			\Require 
		    $|[A'''|b]\rangle$ is a coefficient augmented matrix belongs to $\mathbb{F}^{m\times(n+1)}_2$, where $|A'''\rangle$ is a $(m+2n)\times (n+1)$ matrix 
			\Ensure  $\exists$ $|x\rangle$ collapses into $x$ such that $Ax=b$
            \State   $mark_1=X(0)$; $\cdots$ ; $mark_n=X(0)$;
	        \State    $tag_1=0$;  $\cdots$ ; $tag_m=0$;
			\State   \textbf{for} $j \leftarrow 1$ $\textbf{to}$ $n$ \textbf{do}
            \State \ \ \ \  run steps 6-11 of algorithm 2
			\State \ \ \ \   \textbf{for}  $k \leftarrow m+n+1$ $\textbf{to}$ $m+n+j-1$ and $k \leftarrow m+n+j+1$ $\textbf{to}$ $m+2n$ \textbf{do}
			\State \ \ \ \ \ \ \ \  $Toffoli(mark_{j};a_{(k-(m+n))j};a_{kj})$ ;
			\State \ \ \ \ $CNOT(mark_{j};a_{(j+m+n)j})$ ;
			\State \ \ \ \   $X(mark_{j})$;$Toffoli(mark_{j};b_j;b_{j+m+n})$;$X(mark_{j})$;
			\State   \textbf{for}  $j \leftarrow 1$ $\textbf{to}$ $n$ \textbf{do}
			\State \ \ \ \  $solution_j=0$ ;
			\State \ \ \ \  \textbf{for}  $h \leftarrow 1$ $\textbf{to}$ $n$ \textbf{do}
			\State \ \ \ \ \ \ \ \ $k_{h}=0$;$Toffoli(H(k_{h});a_{(j+m+n)h};solution_j)$;
			\State \ \ \ \  $CNOT(b_{j+m+n};solution_j)$;
			\State \ \ \ \  $x_j=solution_j$;
			\State  \textbf{return} $x=(x_1,\cdots,x_n)$; $rank(A)=count(mark_j==0)$
		\end{algorithmic}
\end{algorithm}
\noindent Note: Same as Algorithm 2, "$k \leftarrow m+n+j+1$ $\textbf{to}$ $m+2n$" can be omitted in step 5, but it does not affect the order of magnitude of the number of quantum gates. Symbol description is the same as Algorithm \ref{Alg:QGEFGS}.

Its quantum circuit is as shown in Figure \ref{fig:QCFSLEAV}:
\begin{figure*}[htbp]
    \centering
    \includegraphics[scale=0.2]{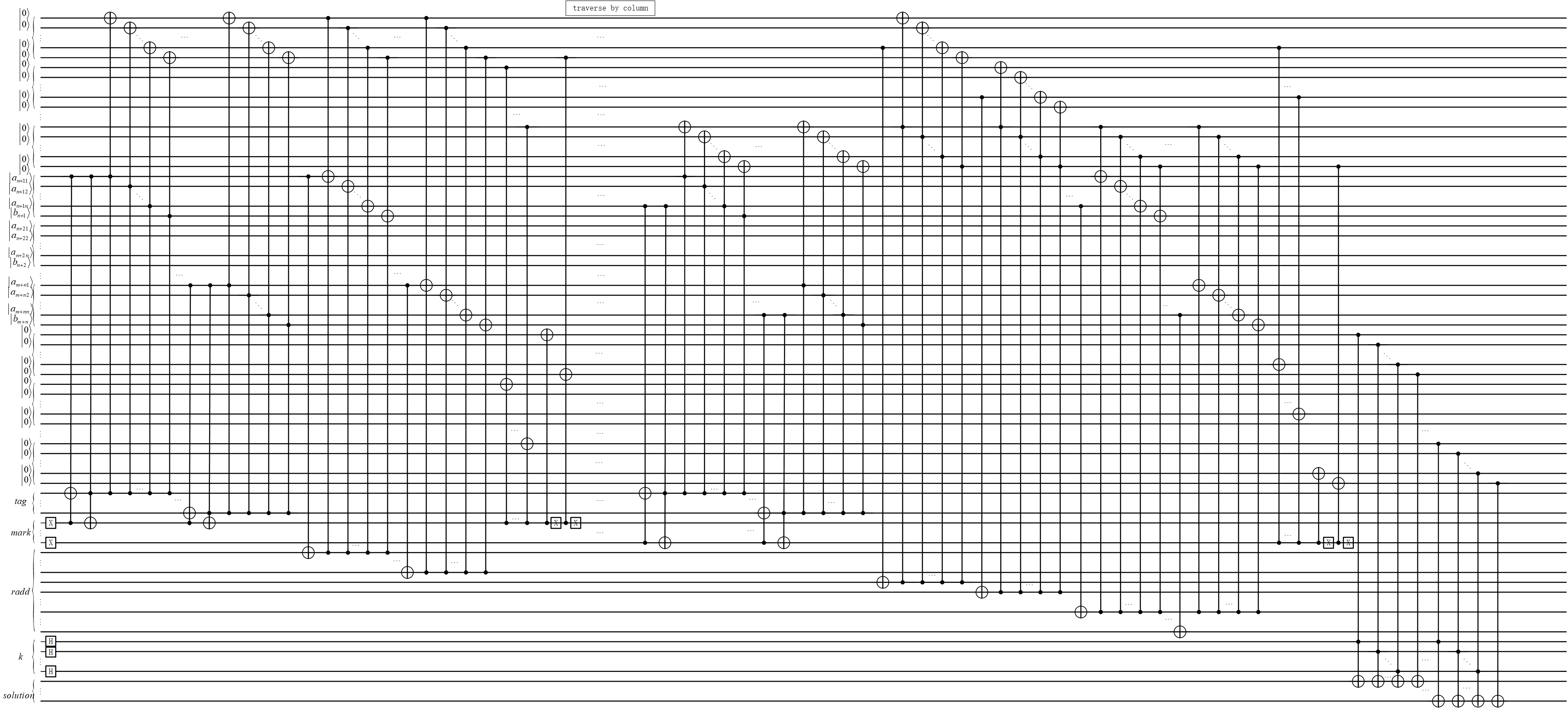}
    \caption{Quantum circuit for solving linear equation (Another Version)}
    \label{fig:QCFSLEAV}
\end{figure*}

Brief description of Algorithm \ref{Alg:QAFGSARAV}: The whole process is similar to Algorithm \ref{Alg:QAFGSAR}, the difference is that not only adding $n \times (n+1)$ zero quantum states above the original matrix, but also adding $n \times (n+1)$ zero quantum states below the original matrix for storing the general solution. Similarly, the $(m+n+1)$-th to the $(m+2n)$-th rows of the new matrix correspond to the solutions $x_1,\cdots,x_n$ respectively, and the auxiliary qubits $|k_h\rangle$ are subjected to Hadamard transformation such that the coefficients of the basic solution system is 0 or 1, then add its row to the auxiliary qubits $|solution\rangle$.

We analyze that Algorithm \ref{Alg:QAFGSARAV} and Algorithm \ref{Alg:QAFGSAR} need the same number of quantum gates for solving quantum linear equations.

Vividly, $|A'''\rangle$ will become the following form after Algorithm \ref{Alg:QAFGSARAV}:
\begin{equation*}
    \begin{aligned}
    |A'''\rangle =
\begin{bmatrix} 
|O\rangle_{n\times n}&|O\rangle_{n\times 1}\\ 
|A\rangle_{m \times n}&|b\rangle_{m\times 1}\\ 
|O\rangle_{n\times n}&|O\rangle_{n\times 1}
\end{bmatrix}_{(m+2n)\times (n+1)} 
\stackrel{\text{Algorithm 3}}{\longrightarrow}
\begin{bmatrix} 
|U\rangle_{n\times n}&|b'\rangle_{n\times 1}\\ 
|O\rangle_{m \times n}&|O\rangle_{m\times 1}\\ 
|\eta\rangle_{n\times n}&|b'\rangle_{n\times 1}
\end{bmatrix}_{(m+2n)\times (n+1)}. 
    \end{aligned}
\end{equation*}

Each row of the quantum state matrix is a register. Here, $|A\rangle$ represents the quantum coefficient matrix, $|b\rangle$ represents the quantum constant vector, $|O\rangle$ represents the all-zero quantum state matrix, $|U\rangle$ represents the quantum upper triangular matrix and the elements of the pivot column are zero except for the pivot, $|\eta\rangle$ represents the quantum basic solution system, and $|b'\rangle$ represents the quantum special solution vector. Its universal quantum circuit is shown in Figure \ref{fig:GQAFGSARAV}.
\begin{figure}[H]
    \centering
    \includegraphics[scale=0.45]{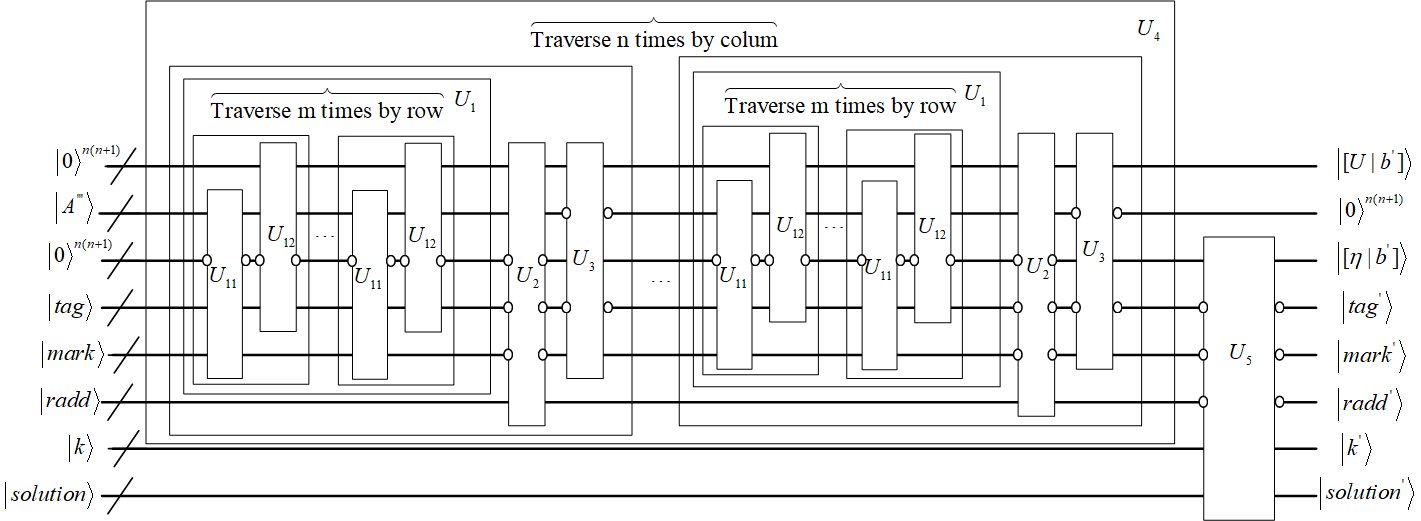}
    \caption{Universal quantum circuit for general solution and rank (Another Version)}
    \label{fig:GQAFGSARAV}
\end{figure}

Unlike Figure \ref{fig:GQAFGSAR}, there is an additional $n\times(n+1)$-qubit auxiliary register for storing the general solution. This is used to disentangle the data register where the original matrix is located with other registers and keep data qubits unchanged, so as to facilitate subsequent applications.

\vspace{0.25cm}
\noindent \textbf{Remark} In summary, we compare Algorithm \ref{Alg:QAFGSAR} and Algorithm \ref{Alg:QAFGSARAV}. If we need to measure the final result, we do not need to add a third register, i.e., we do not need to add the $n\times (n+1)$ zero quantum states below the original matrix for storing the general solution. Directly use algorithm \ref{Alg:QAFGSAR} to measure the data register, then we can get a $n\times n(n+1)$ matrix $[\eta|b']$ that stores the basic solution system and special solution. Compared with Algorithm $\ref{Alg:QAFGSARAV}$, $n(n+1)$ qubits are saved. If the solutions of quantum linear equations need to be measured at last, the basic solution system and special solution can be directly output in the data register, and they can be disentangled with other auxiliary qubits (not add the registers containing $|k\rangle$ and $|solution\rangle$); if there is no solution, it can also be judged by the quantum storage register. It is as shown in following Figure \ref{fig:GQAFBSSASS}.
\begin{figure}[H]
    \centering
    \includegraphics[scale=0.65]{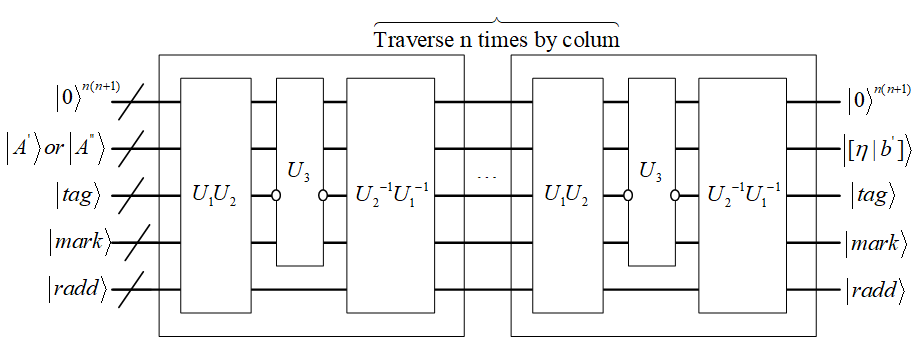}
    \caption{Universal quantum circuit for basic solutions system and special solutions (a)}
    \label{fig:GQAFBSSASS}
\end{figure}
The unitary operators $U_1, U_2, U_3$ in Figure $\ref{fig:GQAFBSSASS}$ correspond to $U_1, U_2, U_3$ in Figure $\ref{fig:GQAFGSAR}$. Auxiliary registers' inputs are the same as outputs and they are disentangled with the data registers.

Due to the nature of quantum entanglement, different registers cannot be measured simultaneously. Therefore, we must store the general solution in the auxiliary register. If we do not need to measure the final result, we need to use the auxiliary register for subsequent other applications. At this time, it is necessary to apply Algorithm \ref{Alg:QAFGSARAV}, which can reverse the first two registers, and only keep the general solution in the auxiliary register added, as shown in Figure \ref{fig:GQAFBSSASSAV}.
\begin{figure}[H]
    \centering
    \includegraphics[scale=0.63]{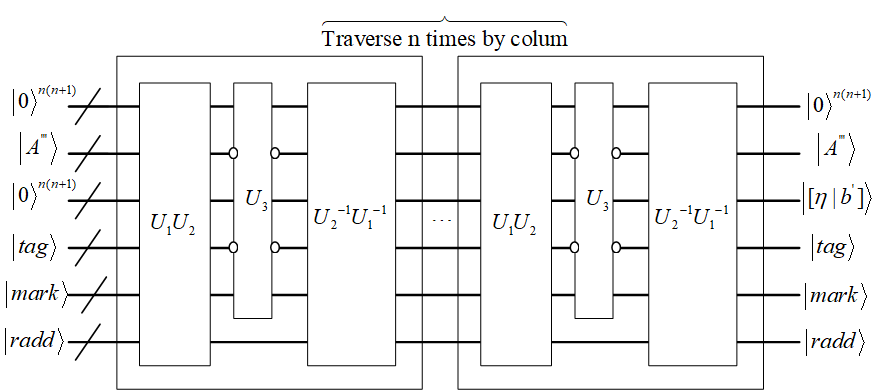}
    \caption{Universal quantum circuit for basic solutions system and special solutions (b)}
    \label{fig:GQAFBSSASSAV}
\end{figure}
In Figure $\ref{fig:GQAFBSSASSAV}$, $U_1,U_2,U_3$ all correspond to $U_1,U_2,U_3$ in Figure $\ref{fig:GQAFGSARAV}$. The data register is disentangled with other auxiliary registers and the data qubits are unchanged.

\section{Application} \label{Application}
In this section, we apply Algorithm \ref{Alg:QAFGSAR} and Algorithm \ref{Alg:QAFGSARAV}  as a subroutine to parallel Simon's algorithm \cite{simon1997power} (with multiple periods), Grover Meets Simon algorithm \cite{leander2017grover}, and Alg-PolyQ2 Algorithm \cite{bonnetain2019quantum}. The desired results can be obtained by one measurement at last.

\subsection{Application to parallel Simon's Algorithm}
The goal of Simon's algorithm \cite{simon1997power} is to solve the periodic problem of hidden subgroups, which can be described as the following problem:

\textbf{Simon's Problem:} Suppose there is a quantum oracle with a computable function $f$: $\{0,1\}^n \xrightarrow{}\{0,1\}^n$, and it satisfies the promise: $\exists$ $s \in \{0,1\}^n$, such that $\forall$ $x \in \{0,1\}^n$, we have $f(x)=f(x\oplus s)$. The goal is to find the period $s$.

Simon's algorithm can be performed by the following steps:
\begin{enumerate}
    \item Prepare the initial state$|0\rangle^{\otimes n}_{\uppercase\expandafter{\romannumeral1}}|0\rangle^{\otimes n}_{\uppercase\expandafter{\romannumeral2}}$, and perform the Hadamard transform $H^{\otimes n}$ on the first register. The following quantum states can be obtained:
    \[ \frac{1}{\sqrt{2^n}} \sum_{x \in \{0,1\}^n} |x\rangle_{\uppercase\expandafter{\romannumeral1}} |0\rangle_{\uppercase\expandafter{\romannumeral2}}.  \]
    \item Perform a quantum query on the function $f$, mapping to the quantum state:
    \[ \frac{1}{\sqrt{2^n}} \sum_{x \in \{0,1\}^n} |x\rangle_{\uppercase\expandafter{\romannumeral1}} |f(x)\rangle_{\uppercase\expandafter{\romannumeral2}}.  \]
    \item Perform the Hadamard transform $H^{\otimes n}$ on the first register, and the following quantum states can be obtained:
     \[ \frac{1}{2^n} \sum_{y \in \{0,1\}^n}\sum_{x \in \{0,1\}^n} (-1)^{x\cdot y}|y\rangle_{\uppercase\expandafter{\romannumeral1}} |f(x)\rangle_{\uppercase\expandafter{\romannumeral2}}.  \]
     Suppose there is some $s \neq 0$, for each $y$, then $|y,f(x)\rangle=|y,f(x\oplus s)\rangle$, and the amplitude of this configuration is:
     \begin{equation}
     \label{equ1}
        \frac{1}{2^n} [(-1)^{x\cdot y}+(-1)^{(x\oplus s)\cdot y}] = \frac{1}{2^n} (-1)^{x\cdot y}[(1+(-1)^{s\cdot y}].
     \end{equation}
      \item 
      The amplitude satisfying $y \cdot s=1$ is 0. Therefore, it will randomly and uniformly collapse after measuring the first register, and we can obtain a value of y, which must satisfy $y \cdot s=0$.
\end{enumerate}

Repeat the above algorithm $\mathcal{O}(n)$ times to get $n-1$ linearly independent values of $y$, then $s$ can be obtained by solving the following classical linear equations:
\begin{equation}
\label{equ2}
    \left\{ 
\begin{aligned}
    &y_1 \cdot s =0 \\
    &y_2 \cdot s =0 \\
    &\cdots \\
    &y_n \cdot s =0 .\\
\end{aligned}
\right.
\end{equation}

Now we apply Algorithm \ref{Alg:QAFGSAR} in section \ref{Algorithms} to  parallel Simon's algorithm. Since Simon's algorithm is probabilistic, we parallel $m=\mathcal{O}(n)$ Simon's algorithms and finally obtain the period $s$ with high probability by one measurement. The quantum circuit that applies Algorithm \ref{Alg:QAFGSAR} for solving the quantum linear equations to  parallel Simon's algorithm is as shown in Figure \ref{fig:QLEASA}.
\begin{figure}[H]
    \centering
    \includegraphics[scale=0.6]{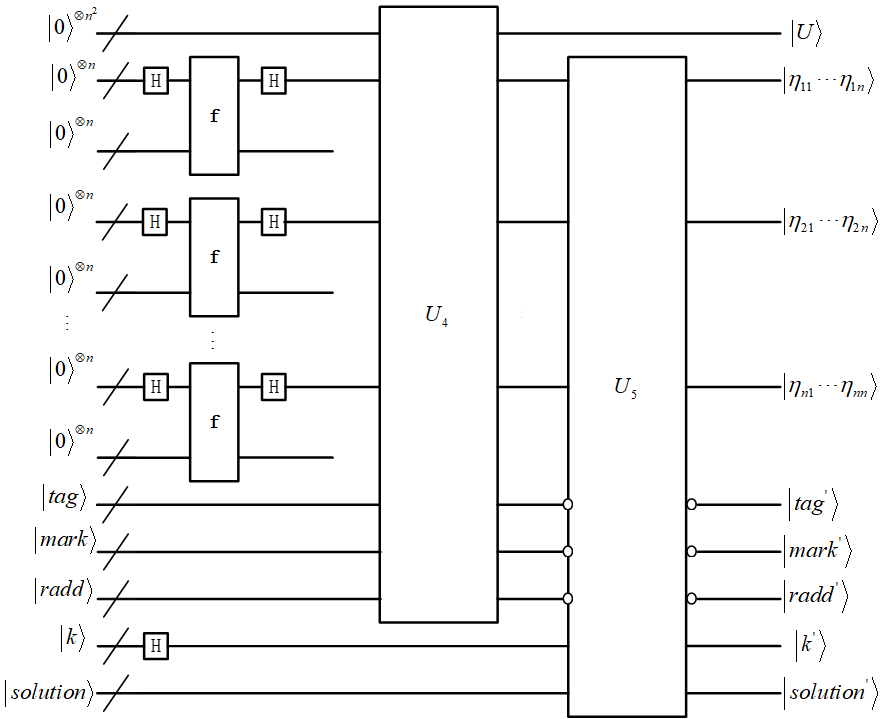}
    \caption{Quantum linear equations applied parallel Simon's algorithm}
    \label{fig:QLEASA}
\end{figure}

Next, we compute the probability that we can obtain the solution by one measurement.

In Figure \ref{fig:QLEASA}, there are a total of $m=\mathcal{O}(n)$ Simon's algorithms in parallel, and the quantum states obtained by the first register of each Simon's algorithm are stored together in $|A''\rangle$ of Algorithm \ref{Alg:QAFGSAR} (since it is a homogeneous quantum linear system of equations, there is no need to use coefficient augmented matrix, i.e., $|A''\rangle$ is a $m\times n$ matrix here).

\begin{lemma}\label{lemma1}
Suppose $Y \subset \mathbb{F}_2^n$ is an $(n-1)$-dimensional subspace, randomly select $y_1,\cdots,y_{n-1}\in Y$ at uniform , then the probability that $y_1,\cdots,y_{n-1}$ is linearly independent is at least 0.288.
\end{lemma}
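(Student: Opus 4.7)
The plan is to reduce the statement to an explicit product formula and then bound that product by the well-known Euler-type infinite product $\prod_{j\ge 1}(1-2^{-j})$.

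First, I would count directly. Since $Y$ is a subspace of $\mathbb{F}_2^n$ of dimension $n-1$, we have $|Y|=2^{n-1}$, and each $y_i$ is drawn independently and uniformly from $Y$. The tuple $(y_1,\ldots,y_{n-1})$ is linearly independent iff, for every $k$, the vector $y_k$ lies outside the subspace spanned by $y_1,\ldots,y_{k-1}$. Given that the first $k-1$ vectors are independent, that span has exactly $2^{k-1}$ elements, so the conditional probability that $y_k$ avoids it is $(2^{n-1}-2^{k-1})/2^{n-1}=1-2^{k-n}$. Multiplying these conditional probabilities and reindexing with $i=n-k$ gives
\begin{equation*}
\Pr[y_1,\ldots,y_{n-1}\text{ independent}]=\prod_{k=1}^{n-1}\!\bigl(1-2^{k-n}\bigr)=\prod_{i=1}^{n-1}\!\bigl(1-2^{-i}\bigr).
\end{equation*}

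Next I would compare the finite product with its limit. Because every factor $1-2^{-i}$ lies in $(0,1)$, the partial products are strictly decreasing in $n$, so for every $n\ge 2$,
\begin{equation*}
\prod_{i=1}^{n-1}\bigl(1-2^{-i}\bigr)\;\ge\;\prod_{i=1}^{\infty}\bigl(1-2^{-i}\bigr)=:P_\infty .
\end{equation*}
Thus it suffices to show $P_\infty>0.288$. The infinite product converges absolutely because $\sum_i 2^{-i}<\infty$, and its value is the classical constant $\phi(1/2)$ from the Euler function.

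Finally, I would obtain a rigorous numerical lower bound for $P_\infty$. The cleanest route is to compute a finite truncation and control the tail: for any $N$,
\begin{equation*}
P_\infty=\Bigl(\prod_{i=1}^{N}(1-2^{-i})\Bigr)\cdot\prod_{i=N+1}^{\infty}(1-2^{-i}),
\end{equation*}
where the tail factor is bounded below by $1-\sum_{i>N}2^{-i}=1-2^{-N}$ (using $\prod(1-a_i)\ge 1-\sum a_i$ for $a_i\in[0,1]$). Taking, say, $N=10$, a direct computation of the first ten factors gives $\prod_{i=1}^{10}(1-2^{-i})\approx 0.28899$, and multiplying by $1-2^{-10}>0.999$ still leaves a value exceeding $0.2887>0.288$, establishing the lemma.

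The main obstacle, such as it is, is purely the final numerical estimate: one has to package the tail bound tightly enough that after the truncation the computed lower bound still clears $0.288$. Everything else is routine linear algebra over $\mathbb{F}_2$ and the elementary observation that the partial products dominate their infinite limit.
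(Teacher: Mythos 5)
Your proof is correct and follows essentially the same route as the paper: you derive the same product $\prod_{i=1}^{n-1}(1-2^{-i})$ by conditioning on each new vector avoiding the span of the previous ones, and bound it below by the infinite product $\prod_{i=1}^{\infty}(1-2^{-i})\approx 0.2888$. The only difference is cosmetic: the paper phrases the uniformity of the $y_i$ via the Simon-measurement amplitudes and cites Euler's pentagonal number theorem for the value of the infinite product, whereas you work directly from the lemma's hypothesis and certify the constant with a truncation plus the tail estimate $\prod_{i>N}(1-2^{-i})\ge 1-2^{-N}$, which is a perfectly adequate (indeed more self-contained) justification.
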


\begin{proof}
For the first register in each Simon's algorithm of the parallel Simon's algorithm, the probability that $y \cdot s=0$ is
\begin{equation*}
    \begin{aligned}
        Pr[y|y\cdot s=0] = \sum_{x \in R}|\frac{1}{2^n} (-1)^{x\cdot y}(1+(-1)^{s\cdot y})|^2 
        = \frac{1}{2^{n+1}} |1+(-1)^{s\cdot y}|^2,
    \end{aligned}
\end{equation*}
where $R$ is a subgroup over binary fields and is the coset representation, $|R|=2^{n-1}$, then $Pr[y|s\cdot y=0]=\frac{1}{2^{n-1}}$. If the set of Eqs.(\ref{equ2}) has only one nontrivial solution, according to Theorem \ref{theorem2}, when we query $n$ times, we can get $n-1$ linearly independent vectors to obtain the only non-zero solution, i.e., the period $s\neq 0$, and the probability at this time is
\begin{equation*}
    \begin{aligned}
    Pr[linearly \quad independent|s\cdot y=0]
    = \frac{2^{n-1}-1}{2^{n-1}} \cdot  \frac{2^{n-1}-2}{2^{n-1}} \cdots  \frac{2^{n-1}-2^{n-2}}{2^{n-1}} 
    =\prod_{i=1}^{n-1} (1-\frac{1}{2^i}) \geq \prod_{i=1}^{\infty} (1-\frac{1}{2^i}). 
    \end{aligned}
\end{equation*}
According to Euler's pentagon theorem, $Pr[linearly \quad independent|s\cdot y=0]= \prod_{i=1}^{n-1} (1-\frac{1}{2^i}) \geq \prod_{i=1}^{\infty} (1-\frac{1}{2^i})\approx 0.288$.
\end{proof}

Thereby, we know that Simon's algorithm is a probabilistic algorithm, and the promise of Simon's problem is not always fully satisfied in cryptanalysis. In order to improve the success probability of Simon's algorithm as much as possible, we need to increase the number of queries. Next, a theorem is given to weigh the relationship between the parallel number of Simon's queries and the success probability of Simon's algorithm. This theorem can be found similarly in \cite{kaplan2016breaking}:
\begin{theorem}\label{theorem4}
Given $f:\{0,1\}^n \xrightarrow{} \{0,1\}^n$, s is a period of f, $\exists$ $a\in \{0,1\}^n \backslash \{0,s\}$, such that the input $x\in \{0,1\}^n$ satisfies $f(x\oplus a)=f(x)$. Define
 \begin{equation}
 \label{equ3}
      \varepsilon(f)=\max_{a\in \{0,1\}^n \backslash
 \{0,s\}} Pr_x[f(x)=f(x\oplus a)].  
 \end{equation}
 If $\varepsilon(f)\leq p_0<1$, after performing m Simon's algorithms in parallel, the success probability of outputing the period $s$ is at least $1-2^n(\frac{1+p_0}{2})^m$. When choosing $m\geq 3n/(1-p_0)$, it can ensure that $\varepsilon(f)$ is far enough away from 1 in eq.(\ref{equ3}), so as to ensure that the success probability of Simon's algorithm is as large as possible.
\end{theorem}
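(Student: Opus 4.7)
The plan is to view this as a standard union-bound argument over potential bad directions. After $m$ parallel Simon runs the algorithm obtains measurement outcomes $y_1,\dots,y_m$, each of which must satisfy $y_i\cdot s=0$ (this is forced by the periodicity). Recovering $s$ by subsequently solving the homogeneous linear system via Algorithm \ref{Alg:QAFGSAR} succeeds exactly when $\operatorname{span}(y_1,\dots,y_m)=s^{\perp}$, equivalently, when there is no $t\in\{0,1\}^n\setminus\{0,s\}$ orthogonal to every $y_i$. So I will bound the failure probability by
\[
\Pr[\text{fail}]\le\sum_{t\in\{0,1\}^n\setminus\{0,s\}}\Pr[y_i\cdot t=0\text{ for all }i=1,\dots,m].
\]

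The central calculation is to bound, for each fixed $t\ne 0,s$, the probability that a single Simon measurement returns a $y$ with $y\cdot t=0$. Starting from the amplitude of $|y\rangle|f(x)\rangle$ after the second Hadamard, grouping $x$-values by their image under $f$, and then summing the square moduli, one gets
\[
\Pr[y\cdot t=0]=\tfrac12+\tfrac12\,\Pr_{x}[f(x)=f(x\oplus t)].
\]
Since $t\notin\{0,s\}$ the right-hand factor is bounded by $\varepsilon(f)\le p_0$, giving $\Pr[y\cdot t=0]\le (1+p_0)/2$. This is the main obstacle: the Simon promise is not fully satisfied, so one must expand the sum carefully and use the definition of $\varepsilon(f)$ to absorb the unwanted collisions, rather than getting a clean $1/2$ as in the pure Simon setting.

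Because the $m$ parallel Simon instances are independent, the joint probability that all $y_i$ lie in $t^{\perp}$ is at most $((1+p_0)/2)^m$. Summing over the at most $2^n$ choices of $t$ gives
\[
\Pr[\text{fail}]\le 2^n\Bigl(\tfrac{1+p_0}{2}\Bigr)^{m},
\]
which yields the stated success-probability lower bound $1-2^n((1+p_0)/2)^m$.

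For the second assertion, I would substitute $m\ge 3n/(1-p_0)$ into the bound and use $(1-x)^m\le e^{-mx}$ with $x=(1-p_0)/2$:
\[
2^n\Bigl(1-\tfrac{1-p_0}{2}\Bigr)^{m}\le 2^n e^{-m(1-p_0)/2}\le 2^n e^{-3n/2}=e^{n(\ln 2-3/2)},
\]
which decays exponentially in $n$, confirming that this choice of $m$ pushes the success probability exponentially close to $1$ uniformly in $p_0<1$. The routine arithmetic checks and the amplitude expansion in the key lemma are the only technical details; once the per-direction bound $(1+p_0)/2$ is in hand, union bound and the elementary exponential inequality finish the argument.
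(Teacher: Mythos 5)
Your proposal is correct and follows essentially the same route as the paper's proof: a union bound over the bad directions $t\notin\{0,s\}$, the per-measurement bound $\Pr[y\cdot t=0]=\tfrac{1}{2}\bigl(1+\Pr_x[f(x)=f(x\oplus t)]\bigr)\le\tfrac{1+p_0}{2}$ (which the paper obtains via the character-sum identity $g(x)=\tfrac{1}{2}(\delta_{x,0}+\delta_{x,t})$, matching your grouping-by-image computation), and independence of the $m$ parallel runs to get $2^n\bigl(\tfrac{1+p_0}{2}\bigr)^m$. Your explicit check that $m\ge 3n/(1-p_0)$ makes the failure probability at most $e^{n(\ln 2-3/2)}$ is a small addition the paper only asserts informally, but it does not change the argument.
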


From Lemma \ref{lemma1}, we can see that the upper bound $p_0$ of $\varepsilon(f)$ is $1-0.288=0.712$. According to Theorem \ref{theorem4}, we only need to ensure that the number of parallel Simon's algorithm is $m\geq 3n/(1- p_0)\approx 10.4n$, measure the $|solution'\rangle$ register once at last, then the solution (period $s$) to the problem can be obtained with a probability close to 1.

Simon's quantum algorithm can also be applied to the case where the Boolean function F has two or more periods, i.e., there are multiple non-zero solutions to the quantum linear system of equations.

\textbf{Simon's Problem with multiple periods:} Suppose there is a quantum oracle with a computable function $f$: $\{0,1\}^n \xrightarrow{}\{0,1\}^n$. It satisfies the promise: there exist linearly independent vectors $s_1,\cdots,s_k \in \{0,1\}^n$, such that $\forall$ $x \in \{0,1\}^n$, we have $f(x)=f(x\oplus s_i)$. The goal is to find the periods $\{s_1,\cdots,s_k\}$.

Perform steps 1-3 of Simon's algorithm, then the amplitude of the resulting quantum state is
\begin{equation}
\begin{aligned}
    \label{equ4}
      \frac{1}{2^n} (-1)^{x\cdot y}[1+(-1)^{s_1\cdot y}][1+(-1)^{s_2\cdot y}]\cdots[1+(-1)^{s_k\cdot y}]. 
\end{aligned}
\end{equation}

Since the amplitude of $y$ satisfying $y \cdot s_i=1$ is 0, there are m blocks (totally $m\times k$ equations), and we can divide them into k groups to obtain $s_1,\cdots,s_k$:
 \begin{equation}
     \label{equ5}
     \left\{ 
\begin{aligned}
    &y_1 \cdot s_1 =0 \\
    &y_2 \cdot s_1 =0 \\
    &\cdots \\
    &y_m \cdot s_1 =0 \\
\end{aligned}
\right.
,
\cdots
,
\left\{ 
\begin{aligned}
    &y_1 \cdot s_k =0 \\
    &y_2 \cdot s_k =0 \\
    &\cdots \\
    &y_m \cdot s_k =0. \\
\end{aligned}
\right.
 \end{equation}
Eqs.(\ref{equ5}) is actually the same as Simon's original Eqs.(\ref{equ2}).

\begin{lemma}\label{lemma2}
Suppose $Y \subset \mathbb{F}_2^n$ is an $(n-k)$-dimensional subspace, randomly select $y_1,\cdots,y_{n-k}\in Y$ at uniform, then the probability that $y_1,\cdots,y_{n-k}$ is linearly independent is at least 0.288.
\end{lemma}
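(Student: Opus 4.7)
The plan is to adapt the proof of Lemma 1 essentially verbatim, since Lemma 2 differs only in the dimension of the ambient subspace ($n-k$ rather than $n-1$) and in the number of vectors drawn (also $n-k$ rather than $n-1$). The key structural observation is that the ratio ``sample size / ambient dimension'' is exactly 1 in both settings, so the combinatorial computation will yield a product of precisely the same shape and the same numerical constant will fall out at the end.

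First I would note that $|Y|=2^{n-k}$ and that the independent uniform sampling lets us compute the independence probability sequentially. Conditional on $y_1,\ldots,y_{i-1}$ being linearly independent, their $\mathbb{F}_2$-span contains exactly $2^{i-1}$ vectors, so $y_i$ falls outside this span with probability $(2^{n-k}-2^{i-1})/2^{n-k}$. Multiplying these conditional probabilities and re-indexing with $j=n-k-i+1$ gives
\begin{equation*}
    \Pr[\,y_1,\ldots,y_{n-k}\text{ linearly independent}\,] \;=\; \prod_{i=1}^{n-k}\frac{2^{n-k}-2^{i-1}}{2^{n-k}} \;=\; \prod_{j=1}^{n-k}\left(1-\frac{1}{2^j}\right).
\end{equation*}

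Finally I would apply the same lower bound as at the end of Lemma 1's proof: each factor $1-1/2^j$ lies in $(0,1)$, so truncating the infinite product only increases its value, and therefore
\begin{equation*}
    \prod_{j=1}^{n-k}\left(1-\frac{1}{2^j}\right) \;\geq\; \prod_{j=1}^{\infty}\left(1-\frac{1}{2^j}\right) \;\approx\; 0.288,
\end{equation*}
where the numerical value of the convergent infinite product is the one supplied by Euler's pentagonal number theorem, exactly as invoked for Lemma 1. The main (and really only) obstacle is a cosmetic one: confirming that the re-indexing step $j=n-k-i+1$ produces a product whose upper limit is $n-k$ and whose base does not depend on $k$, so that the $0.288$ bound is genuinely uniform in both $n$ and $k$. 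Once that is checked, Lemma 2 is an immediate corollary of the same Euler-product estimate that powered Lemma 1.
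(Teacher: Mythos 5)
Your proof is correct and takes essentially the same route as the paper: the same counting product $\prod_{i=1}^{n-k}\bigl(1-\tfrac{1}{2^{i}}\bigr)$ over the $2^{n-k}$-element subspace, bounded below by the Euler infinite product $\prod_{i=1}^{\infty}\bigl(1-\tfrac{1}{2^{i}}\bigr)\approx 0.288$. The only difference is that the paper first computes the Simon measurement amplitudes to verify that the outcomes are uniform on $Y$, a step you may omit since the lemma statement already assumes uniform selection from the subspace.
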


\begin{proof}
Let $E$ be the event that $(y\cdot s_1=0)\land\cdots\land(y\cdot s_k=0)\land(s_i,s_j \text{\ is linearly independent} (i,j=1,\cdots,k,i\neq j))$, same as the proof of Lemma \ref{lemma1}, for the first register in each Simon's algorithm of the parallel Simon's algorithm, the probability satisfying $E$ is
\begin{equation*}
    \begin{aligned}
         Pr[y|E] = \sum_{x \in R}|\frac{1}{2^n} (-1)^{x\cdot y}[1+(-1)^{s_1\cdot y}]\cdots[1+(-1)^{s_k\cdot y}]|^2 
         = \frac{1}{2^{n+k}} |[1+(-1)^{s_1\cdot y}]\cdots[1+(-1)^{s_k\cdot y}]|^2,
    \end{aligned}
\end{equation*}
where $R$ is a subgroup over binary fields and is the coset representation, $|R|=2^{n-k}$, then $Pr[y|E]=\frac{1}{2^{n-k}}$. According to Theorem \ref{theorem2}, if we query $n$ times, we can get $n-k$ linearly independent vectors to obtain $k$ non-zero solutions, the probability at this time is
\begin{align*}
    Pr[linearly \quad independent|E]
    = \frac{2^{n-k}-1}{2^{n-k}} \cdot  \frac{2^{n-k}-2}{2^{n-k}} \cdots  \frac{2^{n-k}-2^{n-k-1}}{2^{n-k}} 
    =\prod_{i=1}^{n-k} (1-\frac{1}{2^i}) \geq \prod_{i=1}^{\infty} (1-\frac{1}{2^i}).
\end{align*}

Similarly, $Pr[linearly \quad independent|E]= \prod_{i=1}^{n-1} (1-\frac{1}{2^i}) \geq \prod_{i=1}^{\infty} (1-\frac{1}{2^i})\approx 0.288$.
\end{proof}

Unlike the original Simon's algorithm, which only has a non-zero period, the last measured register is the data register where $|A''\rangle$ is located (shown in \ref{fig:GQAFBSSASS}), rather than the $|solution'\rangle$ register. This is because, if we measure the solution register $|solution'\rangle$, it will only collapse to one of the solutions (period $s_i$), but we can get the basic solution system and special solutions by measuring the register where $|A''\rangle$ is located, so as to obtain all solutions (periods $\{s_1,\cdots,s_k\}$).

Similarly, according to Theorem \ref{theorem4}, as long as the number of parallel Simon's algorithm is guaranteed $m\geq 3n/(1-p_0)\approx 10.4n$,
then the solution (all periods $\{s_1,\cdots,s_k\}$) to the problem can be obtained with a probability close to 1 by measuring $|A''\rangle$ register.

\subsection{Application to Grover Meets Simon Algorithm}
The Grover Meets Simon algorithm \cite{leander2017grover} performs a quantum search, uses Simon's algorithm to identify the correct guess, and can attack the FX-construction. It solves the problem of searching for a periodic function, which can be described as the following problem:

$\textbf{Grover Meets Simon Problem:}$ Suppose there is a quantum oracle with a computable function $f$: $\{0,1\}^m \times  \{0,1\}^n \xrightarrow{}\{0,1\}^n$, $\exists$ unique $k_0 \in \{0,1\}^m$, such that $\forall$ $x \in \{0,1\}^n$, we have $f(k_0,x)=f(k_0,x\oplus k_1)$. The goal is to find $k_0$ and period $k_1$.

$\textbf{Simon’s promise:}$ Since some function values $f(k_0,x)$ may have more than two preimages, the function is not a mapping of $2 \xrightarrow{} 1$. Suppose that $u_i$ is a vector in the linear space of periodic function $f(k_0,x)$, choose $\ell=2(n+\sqrt{n})$, so that $\langle u_1,\cdots,u_\ell\rangle$ span a linear space of $f(k_0,x)$ with high probability (at least $\frac{4}{5}$). At this time, under the assumption that $f(k_0,x)$ is a random periodic function whose period is $k_1$, the probability that any function value $f(k_0,x)$ has only two preimages is at least $\frac{ 1}{2}$.

We analyze the Grover Meets Simon algorithm in more detail, reconstruct the classifier $\mathcal{B}$ in the original algorithm from the view of quantum, and while ensuring the high probability of solving $k_0$ and period $k_1$, Algorithm \ref{Alg:QAFGSAR} for solving linear equations in section \ref{Algorithms} is applied to Grover Meets Simon algorithm as a subroutine. Moreover, we construct the specific quantum circuit of Grover Meets Simon algorithm.

Here we first show the universal quantum circuit of Grover Meets Simon algorithm:
\begin{figure}[H]
    \centering
    \includegraphics[scale=0.5]{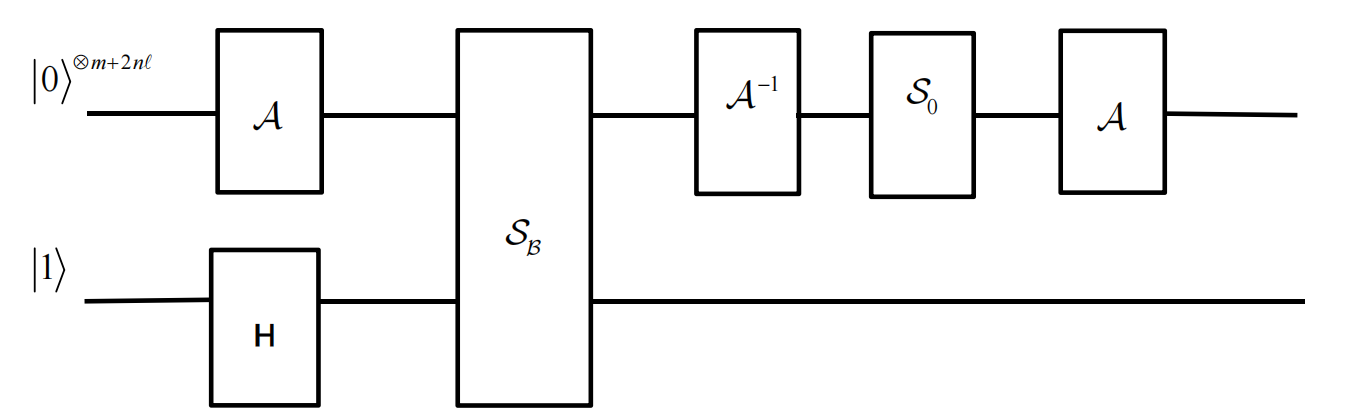}
    \caption{Universal quantum circuit of Grover Meets Simon algorithm}
    \label{fig:GQCOGMS}
\end{figure}

In fact, the quantum algorithm $\mathcal{A}$ is a parallel Simon's algorithm with control qubits $|k_0\rangle$ and works on the input $|0\rangle^{\otimes (m+2n\ell)}$, the steps are as follows:
\begin{enumerate}
     \item  Prepare the initial state $|0\rangle^{\otimes m}_{\text{\uppercase\expandafter{\romannumeral1}}}|0\rangle^{\otimes n\ell}_{\text{\uppercase\expandafter{\romannumeral2}}}|0\rangle^{\otimes n\ell}_{\text{\uppercase\expandafter{\romannumeral3}}}$.
     \item  Perform Hadamard transform $H^{\otimes m+n \ell}$ on the first $m+n \ell$ qubits, the result is
     \[ \frac{1}{\sqrt{2^{m+n\ell}}}\sum_{\substack{k \in \mathbb{F}_2^m\\ x_1, \cdots, x_\ell \in \mathbb{F}_2^n}} |k \rangle_{\text{\uppercase\expandafter{\romannumeral1}}} 
     \left(|x_1 \rangle  \cdots |x_\ell \rangle \right)_{\text{\uppercase\expandafter{\romannumeral2}}} |0 \rangle^{\otimes n\ell}_{\text{\uppercase\expandafter{\romannumeral3}}}. \]
     \item Perform a unitary transformation $U_h$, the result is
     \begin{equation*}
         \begin{aligned}
         \frac{1}{\sqrt{2^{m+n\ell}}}\sum_{\substack{k \in \mathbb{F}_2^m \\ x_1, \cdots, x_\ell \in \mathbb{F}_2^n}} |k \rangle_{\text{\uppercase\expandafter{\romannumeral1}}}  
        (|x_1 \rangle  \cdots &|x_\ell \rangle)_{\text{\uppercase\expandafter{\romannumeral2}}} |h(k,x_1,\cdots, x_\ell) \rangle_{\text{\uppercase\expandafter{\romannumeral3}}}. 
         \end{aligned}
     \end{equation*}
     \item Perform Hadamard transform on the $(m+1)\text{-th},\cdots ,(m+n\ell)\text{-th}$ qubit, the result is
     \begin{equation*}
         \begin{aligned}
             |\psi \rangle = \frac{1}{\sqrt{2^{m+n\ell}}}\sum_{\substack{k \in \mathbb{F}_2^m, u_1, \cdots, u_\ell \in \mathbb{F}_2^n \\ x_1, \cdots, x_\ell \in \mathbb{F}_2^n}} |k \rangle_{\text{\uppercase\expandafter{\romannumeral1}}} \left((-1)^{\langle u_1, x_1 \rangle} |u_1 \rangle  \right. 
            \left. \cdots (-1)^{\langle u_\ell, x_\ell \rangle} |u_\ell \rangle\right)_{\text{\uppercase\expandafter{\romannumeral2}}}  |h(k,x_1,\cdots, x_\ell) \rangle_{\text{\uppercase\expandafter{\romannumeral3}}}.   
         \end{aligned}
     \end{equation*}
\end{enumerate}

Assume that measure the last $n\ell$ qubits of $|\psi \rangle$ in step 4, then these qubits will collapse to
 \[ |h(k,x_1,\cdots, x_\ell) \rangle = |f(k,x_1)||f(k,x_2)||\cdots ||f(k,x_\ell) \rangle, \]
 for some fixed $k,x_1,\cdots , x_\ell \in \mathbb{F}_2^n$.

An arbitrary $n$-qubit state $|z_i \rangle = (-1)^{\langle u_i, x_i \rangle} |u_i \rangle$ from $\psi$ is entangled with $|f(k_0,x_i) \rangle$. Therefore, after measuring $|f(k_0,x_i) \rangle$, $|z_i \rangle$ collapses into a superposition state.
     
If $x_i$ and $x_i+k_1$ are the only two preimages of $f(k_0,x_i) \rangle$, then $|z_i \rangle$ is called a proper state. And $|z_i \rangle$ collapses to the superposition
\begin{equation}
\begin{aligned}
   \label{equ6}
    \left( (-1)^{\langle u_i, x_i \rangle}\right.  + \left. (-1)^{\langle u_i, x_i+k_1 \rangle} \right) |u_i \rangle 
     = (-1)^{\langle u_i, x_i \rangle}\left(1+(-1)^{\langle u_i, k_1 \rangle} \right) |u_i \rangle,  
\end{aligned}
\end{equation}
it can be seen from eq.(\ref{equ6}) that $|u_i \rangle$ has a non-vanishing amplitude if and only if $\langle u_i, k_1 \rangle =0$.

Here, we reconstruct the classifier $\mathcal{B}$ (in Grover Meets Simon algorithm) as a new quantum classifier $\mathcal{S_B}$, then we give its description and quantum circuit construction based on Algorithm \ref{Alg:QAFGSAR} (in fact, this quantum construction can also be applied to the oracle construction in other quantum algorithms):
 
\begin{enumerate}
\item Based on the quantum state $|\psi\rangle$, we apply Algorithm \ref{Alg:QAFGSAR} to construct the specific quantum circuit of Grover Meets Simon algorithm. Use the second register as control qubits to perform $U_4|\psi\rangle$, and store the value in the fourth register. We can obtain the following result (other auxiliary qubits are not listed, the same below):
 \begin{equation*}
 \begin{aligned}
     \frac{1}{\sqrt{2^{m+n\ell}}} \sum_{\substack{k \in \mathbb{F}_2^m, u_1, \cdots, u_\ell \in \mathbb{F}_2^n \\ x_1, \cdots, x_\ell \in \mathbb{F}_2^n}} |k \rangle_{\text{\uppercase\expandafter{\romannumeral1}}} \left((-1)^{\langle u_1, x_1 \rangle}  \cdots \right. (-1)^{\langle u_\ell, x_\ell \rangle} 
     \left.|\eta_1 \cdots \eta_n 0^{(\ell-n)n} \rangle\right)_{\text{\uppercase\expandafter{\romannumeral2}}}  |h(k,x_1,\cdots, x_\ell) \rangle_{\text{\uppercase\expandafter{\romannumeral3}}} 
 |mark \rangle_{\text{\uppercase\expandafter{\romannumeral4}}},
 \end{aligned}
 \end{equation*}
where $|\eta_1 \cdots \eta_n\rangle$ is the basic solution system of quantum linear equations $|u_1 \cdots u_\ell \rangle$.

\item  Use the second and the fourth register as control qubits to perform $U_5$ and store the value in the fifth register. We can obtain the following result
\begin{equation*}
 \begin{aligned}
     \frac{1}{\sqrt{2^{m+n\ell}}} \sum_{\substack{k \in \mathbb{F}_2^m, u_1, \cdots, u_\ell \in \mathbb{F}_2^n \\ x_1, \cdots, x_\ell \in \mathbb{F}_2^n}} |k \rangle_{\text{\uppercase\expandafter{\romannumeral1}}}  \left((-1)^{\langle u_1, x_1 \rangle}  \cdots (-1)^{\langle u_\ell, x_\ell \rangle} \right. & \left.|\eta_1 \cdots \eta_n 0^{(\ell-n)n} \rangle \right)_{\text{\uppercase\expandafter{\romannumeral2}}}\\ 
     &|h(k,x_1,\cdots, x_\ell) \rangle_{\text{\uppercase\expandafter{\romannumeral3}}} 
     |mark \rangle_{\text{\uppercase\expandafter{\romannumeral4}}} |solution) \rangle_{\text{\uppercase\expandafter{\romannumeral5}}},
 \end{aligned}
 \end{equation*}  
where when $dim(span(u_1, \cdots, u_\ell))=n-1$, i.e. $count(mark_j==0)=n-1$, the value of $solution$ is $k_1'$; otherwise, the value of $solution$ will be multiple or only zero. According to Theorem \ref{theorem2}, only when $dim(span(u_1, \cdots, u_\ell))=n-1$, the coefficient matrix $(u_1, \cdots, u_\ell)$ has a unique non-zero solution, so that find $k_1'$ according to $\langle u_i, k_1 \rangle$.

\item Randomly select $\lceil \frac{3m+n\ell}{n}\rceil$ plaintext pairs $(m_i, m_i')$ $(m_i, m_i' \in \mathbb{F}_2^m, m_i \neq m_i')$, prepare the quantum uniform superposition state in the sixth register, and perform inverse transformation $U_4^{-1}$. We can obtain the following result
\begin{equation*}
 \begin{aligned}
      \frac{1}{\sqrt{2^{m+n\ell}\lceil \frac{3m+n\ell}{n} \rceil}} \sum_{\substack{k \in \mathbb{F}_2^m, u_1, \cdots, u_\ell \in \mathbb{F}_2^n \\ x_1, \cdots, x_\ell \in \mathbb{F}_2^n}} |k \rangle_{\text{\uppercase\expandafter{\romannumeral1}}} &\left((-1)^{\langle u_1, x_1 \rangle}  \cdots \right.
     \left.(-1)^{\langle u_\ell, x_\ell \rangle} |u_1 \cdots u_\ell \rangle\right)_{\text{\uppercase\expandafter{\romannumeral2}}} \\ &|h(k,x_1,\cdots, x_\ell) \rangle_{\text{\uppercase\expandafter{\romannumeral3}}}|1 \rangle^{\otimes n}_{\text{\uppercase\expandafter{\romannumeral4}}} 
      |solution \rangle_{\text{\uppercase\expandafter{\romannumeral5}}} \left( \sum_{i=1}^{\lceil \frac{3m+n\ell}{n} \rceil}|m_i\rangle |m_i' \rangle \right)_{\text{\uppercase\expandafter{\romannumeral6}}}. 
 \end{aligned}
 \end{equation*}
At this time, there is entanglement between the second register and the fifth register.
 \item Suppose that there is a quantum oracle $O$, which can calculate the function $f_{k,k_1'}$, $k \in \mathbb{F}_2^m, k_1' \in \mathbb{F}_2^n$,
\[ \left\{ \begin{aligned}
    &f_{k_0,k_1}(k=k_0\land k_1'=k_1)=1, 
    \\  &f_{k_0,k_1}(k\neq k_0 \lor k_1'\neq k_1)=0.
\end{aligned}\right.    \] 
Base on calling the oracle, define the unitary operator $O$:
\[O|k\rangle|k_1'\rangle|y\rangle \equiv |y\oplus f_{k_0,k_1}(k,k_1') \rangle,\]
where $f(k,x)=\text{Enc}(x)\oplus E_k(x)=E_{k_0}(x\oplus k_1)\oplus k_2\oplus E_k(x)$, Enc$(x)$ corresponds to FX-construction Enc$(x)=E_{k_0}(x\oplus k_1)\oplus k_2$, $E_k$ is a random permutation in a secure block cipher \cite{leander2017grover}. When $E_{k_0}(m_i\oplus k_1)\oplus E_{k_0}(m_i'\oplus k_1)=E_k(m_i\oplus k_1')\oplus E_k(m_i'\oplus k_1')$ for all $(m_i,m_i')$, then $f_{k_0,k_1}=1$.
\item  Add the auxiliary qubit $|1\rangle$ and perform Hadamard transformation, then perform Grover iteration $k$ times $Q^k|\psi\rangle$, return the result $k_0, k_1$ by measuring the first and fifth register.
\end{enumerate}

Next, the quantum classifier $\mathcal{S_B}$ in Figure \ref{fig:GQCOGMS} can be constructed, and the detailed quantum circuit of the Figure \ref{fig:GQCOGMS} is as shown in Figure \ref{fig:SQCOGMS}. 
\begin{figure}[H]
    \centering
    \includegraphics[scale=0.4]{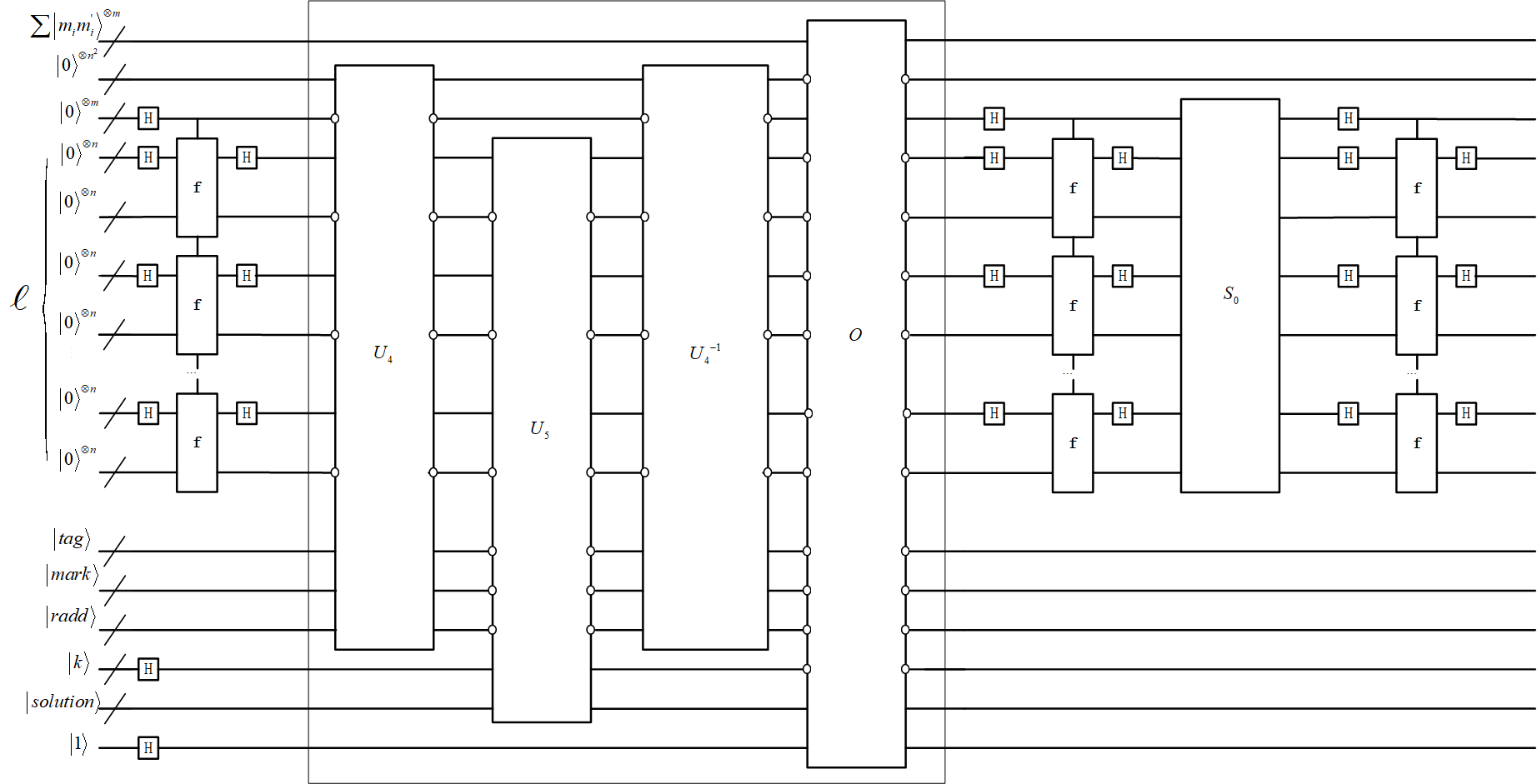}
    \caption{Specific quantum circuit of Grover Meets Simon algorithm}
    \label{fig:SQCOGMS}
\end{figure}

Where, the framed part in Figure \ref{fig:SQCOGMS} is the quantum classifier construction of $\mathcal{S_B}$ in Figure \ref{fig:GQCOGMS}, and $\mathcal{S}_0$ is the unitary operator $2|0 \rangle^{m+2n\ell}\langle 0|^{m+2n\ell}-I_{m+2n\ell}$. The iteration operator is $Q=\mathcal{A}\mathcal{S}_0\mathcal{A}^{-1}\mathcal{S_B}$, where $\mathcal{S_B}=OU_4^{-1}U_5U_4$.

The analysis of the success probability is given in \cite{leander2017grover}, and we summarize it as the following theorem:
\begin{theorem}\label{theorem5}
By choosing $\ell=2(n+\sqrt{n})$ such that the probability that $\langle u_1,\cdots,u_\ell\rangle$  contains at least $n- 1$ proper states is at least $\frac{4}{5}$, and randomly select $\lceil \frac{3m+n\ell}{n}\rceil$ plaintext pairs to make the classifier output 1, the probability of getting $k=k_0$ is at least $1-\frac{1}{2^{2m+n\ell-4}}$. Based on the Quantum Amplitude Amplification Theorem proposed by Brassard, Hoyer, Mosca and Tapp \cite{brassard2002quantum}, choose the number of Grover iterations $k = \lceil\ \frac{\pi}{4\arcsin{2^{-\frac{m}{2 }}}}\rceil$, then the probability of getting a good state $|k_0,k_1\rangle$ is at least $\frac{2}{5}$.
\end{theorem}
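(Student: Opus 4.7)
The plan is to decompose the theorem into three independent probabilistic claims and then combine them via a union bound and the Quantum Amplitude Amplification theorem. First I would establish the ``enough proper states'' bound, then the classifier soundness bound, and finally invoke amplitude amplification.

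For the first stage, I would show that under Simon's promise each measured $u_i$ (coming from an $\ell$-fold Simon sampling on the correct key $k_0$) lies in the hyperplane $k_1^\perp \subset \mathbb{F}_2^n$, and each $u_i$ associated with a function value having exactly the two preimages $\{x_i, x_i \oplus k_1\}$ is a \emph{proper} state in the sense of equation (\ref{equ6}). By the promise, the probability that $|z_i\rangle$ is proper is at least $\tfrac{1}{2}$. Then the number of proper states among $\ell$ samples dominates a $\text{Binomial}(\ell,\tfrac{1}{2})$ variable, and a Chernoff-type estimate with $\ell = 2(n+\sqrt{n})$ yields at least $n+\sqrt{n}$ proper states with high probability. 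Conditioned on having that many proper samples, I would invoke a standard counting argument (analogous to Lemma \ref{lemma1}) showing that uniformly drawn vectors from the $(n-1)$-dimensional space $k_1^\perp$ span an $(n-1)$-dimensional subspace except with exponentially small probability, so the overall bound of $\tfrac{4}{5}$ follows for a suitable constant in the choice of $\ell$.

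For the second stage, I would analyze the classifier $\mathcal{S_B}$ on a wrong guess $(k,k_1')\neq (k_0,k_1)$. Modeling $E_k$ as a random permutation, for each plaintext pair $(m_i,m_i')$ the collision event
\begin{equation*}
E_{k_0}(m_i\oplus k_1)\oplus E_{k_0}(m_i'\oplus k_1)=E_k(m_i\oplus k_1')\oplus E_k(m_i'\oplus k_1')
\end{equation*}
has probability at most $2^{-n}$, and the events are (essentially) independent across pairs. Therefore the probability that a fixed wrong pair $(k,k_1')$ passes all $\lceil(3m+n\ell)/n\rceil$ tests is at most $2^{-(3m+n\ell)}$. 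Summing by a union bound over the $2^{m+n}$ possible wrong pairs gives a total false-positive probability of at most $2^{m+n}\cdot 2^{-(3m+n\ell)} = 2^{-(2m+n\ell-n)} \le 2^{-(2m+n\ell-4)}$ for $n\ge 4$, yielding the stated bound $1-\tfrac{1}{2^{2m+n\ell-4}}$ that conditional on passing the classifier the recovered $k$ equals $k_0$.

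For the final stage I would invoke the Quantum Amplitude Amplification theorem of Brassard, H\o{}yer, Mosca and Tapp. The initial success amplitude of $\mathcal{A}$ on the good subspace $\{|k_0,k_1\rangle\}$ is $\sin\theta = 2^{-m/2}$, so the optimal number of Grover iterations is $k = \lceil \pi/(4\arcsin 2^{-m/2})\rceil$, after which the amplitude on the good subspace is at least $\cos(\theta/2)\approx 1$. Combining the three independent probabilities (the ``enough proper states'' event, the ``classifier is sound'' event, and the amplification factor) by a union bound gives the final lower bound of $\tfrac{2}{5}$, where $\tfrac{4}{5}$ from stage one is the dominating loss and stages two and three each contribute a negligible $O(2^{-m})$-type error.

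The main obstacle is the first stage: rigorously showing that $\ell=2(n+\sqrt{n})$ queries suffice to guarantee $n-1$ proper spanning vectors with probability $\tfrac{4}{5}$. This requires carefully distinguishing ``properness'' (a condition on the preimage multiplicity of $f(k_0,\cdot)$) from ``linear independence'' (a condition on the measured $u_i$) and handling the conditional dependence between them under the random-function model. The other two stages are more routine once the appropriate independence model for $E_k$ and the standard amplitude amplification statement are invoked.
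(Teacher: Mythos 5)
First, a point of comparison: the paper does not prove this statement at all --- it is explicitly presented as a summary of the analysis in \cite{leander2017grover}, so any honest proof has to reconstruct the Leander--May argument. Your three-stage decomposition (proper states, classifier soundness, amplitude amplification) is the right skeleton and matches that source in spirit, but two of your stages contain concrete errors. In stage two, you take a union bound over all $2^{m+n}$ wrong pairs $(k,k_1')$ and compute $2^{m+n}\cdot 2^{-(3m+n\ell)}=2^{-(2m+n\ell-n)}$, then claim this is at most $2^{-(2m+n\ell-4)}$ ``for $n\ge 4$''. The inequality goes the wrong way: $2^{-(2m+n\ell-n)}\le 2^{-(2m+n\ell-4)}$ holds precisely when $n\le 4$, so for the relevant parameter range your bound is weaker than the one claimed in the theorem. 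The fix is that $k_1'$ is not a free choice --- it is the value computed by the Simon subroutine from the measured $u_i$'s once $k$ is fixed --- so the union needs to run only over the $2^m$ wrong keys $k$, giving $2^{m}\cdot 2^{-(3m+n\ell)}=2^{-(2m+n\ell)}\le 2^{-(2m+n\ell-4)}$ with room to spare.

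In stage three the accounting does not produce the stated constant. If, as you write, amplification drove the good-state amplitude to essentially $1$ and the only loss were the $\tfrac{4}{5}$ from stage one, a union-bound combination would give roughly $\tfrac{4}{5}$, not $\tfrac{2}{5}$; so either the theorem's constant is mysteriously pessimistic or your mechanism is wrong, and it is the latter. The point is that the initial amplitude of the good subspace $\{|k_0,k_1\rangle\}$ is \emph{not} $2^{-m/2}$: it carries the factor coming from the probability (at least $\tfrac{4}{5}$) that the $\ell$ Simon samples contain $n-1$ proper, spanning vectors so that the classifier actually marks the state with the correct $k_1$. Since the iteration count $k=\lceil \pi/(4\arcsin 2^{-m/2})\rceil$ is tuned to the idealized amplitude $2^{-m/2}$ rather than the true (smaller, and only lower-bounded) one, the rotation under- or over-shoots, and one must invoke the quantitative form of the amplitude amplification theorem of \cite{brassard2002quantum} with an approximately known success probability to extract a constant; that is where the $\tfrac{2}{5}$ in \cite{leander2017grover} comes from. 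Your stage one is an acceptable sketch (properness probability $\ge\tfrac12$ per sample, Chernoff, then a spanning argument as in Lemma \ref{lemma1}), and you correctly flag the properness-versus-independence conditioning as the delicate point, but as written stages two and three do not establish the bounds $1-2^{-(2m+n\ell-4)}$ and $\tfrac{2}{5}$.
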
 

Based on Algorithm \ref{Alg:QAFGSAR}, we can guarantee the same success probability as the above theorem, or even higher (by choosing the number of parallel Simon's algorithms).

\subsection{Application to Asymmetric Search of a Shift}
We introduce a problem that can be viewed as a general combination of Simon's problem and Grover's problem, which can be solved by a corresponding combination of algorithmic ideas and can reduce query complexity of attacking the FX-construction compared with Grover Meets Simon algorithm. First, we introduce a periodic asymmetric search problem described as follows:

\textbf{Asymmetric Search Problem of a Period :} Suppose $F: \{0,1\}^m\times \{0,1\}^n \xrightarrow{}\{0,1\}^\ell$, $g: \{0,1\}^n \xrightarrow{}\{0,1\}^\ell$ are two functions. $F$ is a family of functions on $\{0,1\}^m$, written as $F(i,\cdot)=f_i(\cdot)$, $\exists$ unique $i_0\in \{0,1\}^m$, such that $\forall$ $x\in \{0,1\}^n$, we have $f_{i_0}(x)\oplus g(x)=f_{i_0}(x\oplus s)\oplus g(x\oplus s)$. The goal is to find $i_0$ and $s$.

$\textbf{Simon's promise:}$ Suppose $s$ is a period of the function $f_{i_0}\oplus g$,
$\exists$ $a\in \{0,1\}^n \backslash
 \{0,s\},i\in \{0,1\}^m \backslash
 \{i_0 \}$, such that the input $x\in \{0,1\}^n$ satisfies $(f_i\oplus g)(x\oplus a)=(f_i\oplus g)(x)$. Assume
 \begin{equation}
 \label{equ7}
     \max_{\substack{a\in \{0,1\}^n \backslash
 \{0,s\}  \\  i\in \{0,1\}^m \backslash
 \{i_0 \} }} Pr[(f_i\oplus g)(x\oplus a)=(f_i\oplus g)(x)]  \leq \frac{1}{2}.
 \end{equation}

We analyze Alg-PolyQ2 algorithm in detail, apply Algorithm \ref{Alg:QAFGSARAV} for solving linear equations in section \ref{Algorithms} as a subroutine to Alg-PolyQ2 algorithm \cite{bonnetain2019quantum}, and while ensuring the high probability of solving $i_0$ and period $s$, we construct the circuit of the \textbf{test} oracle and the specific circuit of the entire algorithm.

We show the universal quantum circuit of Alg-PolyQ2 algorithm about the \textbf{test} oracle:
\begin{figure}[H]
    \centering
    \includegraphics[scale=0.55]{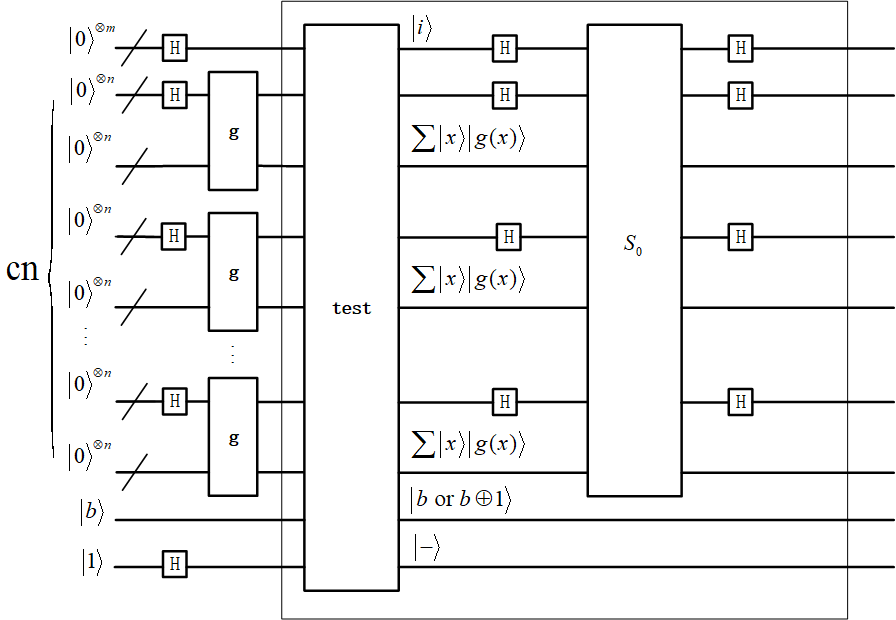}
    \caption{Universal quantum circuit of Alg-PolyQ2 algorithm}
    \label{fig:GQCOTOSA}
\end{figure}
\noindent Where $\mathcal{S}_0$ is unitary operator $2|0\rangle^{\otimes m+2cn^2}\langle 0|^{\otimes m+2cn^2}-I_{m+2cn^2}$; the framed part in Figure \ref{fig:GQCOTOSA} is an iterative operator $(I_{cn^2}H^{\otimes m+cn^2})\mathcal{S}_0 (H^{\otimes m+cn^2}I_{cn^2}) \otimes test$.

Compared with Grover Meets Simon algorithm, this algorithm has two improvements: the first improvement is to reduce the query to $g$, from the exponential level to the polynomial level; the second improvement is that for each $i\in I $, if we get the superposition state $|\psi_g\rangle=\otimes^{cn}\left( \sum_{x\in \{0,1\}^n}|x\rangle|g(x)\rangle \right)$ and $f_i$ can be queried, then we can obtain whether $f_i\oplus g$ has a period without repeatedly querying $g$.

Here, we apply Algorithm \ref{Alg:QAFGSARAV} as a subroutine to this quantum algorithm and give the specific construction of the \textbf{test} oracle. We can obtain $i_0$ and period $s$ by the last measurement. The whole algorithm is as follows:
\begin{enumerate}
    \item Prepare the initial state $|0\rangle^{\otimes 2cn^2}_{\text{\uppercase\expandafter{\romannumeral1}}}|0\rangle^{\otimes m}_{\text{\uppercase\expandafter{\romannumeral2}}}|0\rangle_{\text{\uppercase\expandafter{\romannumeral3}}}$.
    \item Perform the Hadamard transform $H^{\otimes m+cn^2}$ on the first $m+cn^2$ qubits, the result is 
     \[\frac{1}{\sqrt{2^{m+cn^2}}}\otimes^{cn}\left( \sum_{x\in \{0,1\}^n} |x\rangle   \right)_{\text{\uppercase\expandafter{\romannumeral1}}}  \otimes \sum_{i\in \{0,1\}^m} |i\rangle_{\text{\uppercase\expandafter{\romannumeral2}}} \otimes |b\rangle_{\text{\uppercase\expandafter{\romannumeral3}}}.  \]
     where b is initialized to 0.
    \item Query $g$ $cn$ times and obtain the following quantum states:
    \begin{equation*}
        \begin{aligned}
            |\psi_g\rangle_{\text{\uppercase\expandafter{\romannumeral1}}} \otimes \sum_{i\in \{0,1\}^m}|i\rangle_{\text{\uppercase\expandafter{\romannumeral2}}} \otimes |0\rangle_{\text{\uppercase\expandafter{\romannumeral3}}}=\frac{1}{\sqrt{2^{m+cn^2}}}
            \otimes^{cn}\left( \sum_{x\in \{0,1\}^n} |x\rangle |g(x)\rangle   \right)_{\text{\uppercase\expandafter{\romannumeral1}}}  \otimes \sum_{i\in \{0,1\}^m} |i\rangle_{\text{\uppercase\expandafter{\romannumeral2}}} \otimes |0\rangle_{\text{\uppercase\expandafter{\romannumeral3}}}. 
        \end{aligned}
    \end{equation*}
    \item Similarly, query $f$ $cn$ times and obtain the following quantum states:
    \begin{equation*}
        \begin{aligned}
             |\psi_{g\oplus f}\rangle_{\text{\uppercase\expandafter{\romannumeral1}}} \otimes \sum_{i\in \{0,1\}^m}|i\rangle_{\text{\uppercase\expandafter{\romannumeral2}}} \otimes |0\rangle_{\text{\uppercase\expandafter{\romannumeral3}}} =\frac{1}{\sqrt{2^{m+cn^2}}}
            \otimes^{cn}\left( \sum_{x\in \{0,1\}^n} |x\rangle |(g\oplus f)(x)\rangle \right)_{\text{\uppercase\expandafter{\romannumeral1}}} \otimes \sum_{i\in \{0,1\}^m}|i\rangle_{\text{\uppercase\expandafter{\romannumeral2}}} \otimes |0\rangle_{\text{\uppercase\expandafter{\romannumeral3}}}. 
        \end{aligned}
    \end{equation*}
    \item Perform the Hadamard transform on the first $cn^2$ qubits, the result is 
     \begin{equation*}
        \begin{aligned}
           \frac{1}{\sqrt{2^{m+2cn^2}}} \left(\sum_{u_1,x_1\in\{0,1\}^n} (-1)^{u_1\cdot x_1}|u_1\rangle|(g\oplus f)(x_1)\rangle  \otimes \right.    & \left.\cdots \otimes   \sum_{u_{cn},x_{cn}\in\{0,1\}^n} (-1)^{u_{cn}\cdot x_{cn}}|u_{cn}\rangle|(g\oplus f)(x_{cn})\rangle\right) _{\text{\uppercase\expandafter{\romannumeral1}}}   \\
            &\otimes \sum_{i\in \{0,1\}^m}|i\rangle_{\text{\uppercase\expandafter{\romannumeral2}}} \otimes |0\rangle_{\text{\uppercase\expandafter{\romannumeral3}}}.
        \end{aligned}
    \end{equation*}
    \item Since it is necessary to ensure that the register where $|\psi_g\rangle$ is located is disentangled before and after the \textbf{test} oracle, we apply Algorithm \ref{Alg:QAFGSARAV} and use the first register as control qubits to perform $U_4U_5$, store the calculated value of the rank in the fourth register, and store the calculated value of the period in the fifth register. We can obtain the following results (ignore the amplitude, other auxiliary qubits are not listed, the same below):
    \begin{equation*}
        \begin{aligned}
           \sum_{\substack{i \in \mathbb{F}_2^m\\ u_1, \cdots, u_{cn} \in \mathbb{F}_2^n \\ x_1, \cdots, x_{cn} \in \mathbb{F}_2^n}} \left(|0\rangle^{\otimes cn}|(g\oplus f)(x_1)\rangle  \otimes \cdots \otimes   |(g\oplus f)(x_{cn})\rangle\right) _{\text{\uppercase\expandafter{\romannumeral1}}}  
           \otimes |i\rangle_{\text{\uppercase\expandafter{\romannumeral2}}} \otimes |0\rangle_{\text{\uppercase\expandafter{\romannumeral3}}}\otimes |mark\rangle_{\text{\uppercase\expandafter{\romannumeral4}}}\otimes |solution\rangle_{\text{\uppercase\expandafter{\romannumeral5}}}.
        \end{aligned}
    \end{equation*}
    \item Use the fourth register as control qubits (n CNOT gates) and store the value in the third register. The following result is:
   \begin{equation*}
        \begin{aligned}
           \sum_{\substack{i \in \mathbb{F}_2^m\\ u_1, \cdots, u_{cn} \in \mathbb{F}_2^n \\ x_1, \cdots, x_{cn} \in \mathbb{F}_2^n}} \left(|0\rangle^{\otimes cn}|(g\oplus f)(x_1)\rangle  \otimes \cdots \otimes   |(g\oplus f)(x_{cn})\rangle\right) _{\text{\uppercase\expandafter{\romannumeral1}}} 
           \otimes |i\rangle_{\text{\uppercase\expandafter{\romannumeral2}}} \otimes |r\rangle_{\text{\uppercase\expandafter{\romannumeral3}}}\otimes |mark\rangle_{\text{\uppercase\expandafter{\romannumeral4}}}\otimes |solution\rangle_{\text{\uppercase\expandafter{\romannumeral5}}}.
        \end{aligned}
    \end{equation*}
Here, when $dim(span(u_1,\cdots,u_{cn}))=n$, $r=0$; when $dim(span(u_1,\cdots,u_{cn}))<n$, $r=1$.
    \item Perform the inverse transformation $U_4^{-1}$ to obtain the following result:
    \begin{equation*}
        \begin{aligned}
            \sum_{\substack{i \in \mathbb{F}_2^m\\ u_1, \cdots, u_{cn} \in \mathbb{F}_2^n \\ x_1, \cdots, x_{cn} \in \mathbb{F}_2^n}} |u_1\rangle|(g\oplus f)(x_1)\rangle  \otimes \cdots  
            \otimes   |u_{cn}\rangle|(g\oplus f)(x_{cn})\rangle _{\text{\uppercase\expandafter{\romannumeral1}}} 
            \otimes |i\rangle_{\text{\uppercase\expandafter{\romannumeral2}}} \otimes |r\rangle_{\text{\uppercase\expandafter{\romannumeral3}}}\otimes |mark\rangle_{\text{\uppercase\expandafter{\romannumeral4}}}\otimes |solution\rangle_{\text{\uppercase\expandafter{\romannumeral5}}}.
        \end{aligned}
    \end{equation*}
     \item Suppose that there is a quantum oracle $O$, which can calculate the function $f_{i,r}$, $i \in \mathbb{F}_2^m, r \in \mathbb{F}_2$,
\[ \left\{ \begin{aligned}
    &f_{i_0,1}(i=i_0\land r=1)=1, 
    \\  &f_{i_0,1}(i\neq i_0 \lor r\neq 1)=0.
\end{aligned}\right.    \] 
By calling the oracle, one can perform the unitary operator $O$:
\[O|i\rangle|r\rangle|y\rangle \equiv |y\oplus f_{i_0,1}(i,r) \rangle.\]
where $f_i(x)\oplus g(x)=[E_i(x)\oplus E_i(x\oplus 1)] \oplus [\text{Enc}(x)\oplus \text{Enc}(x\oplus 1)]$, Enc$(x)$ corresponds to FX-construction Enc$(x)=E_{i}(x\oplus s)\oplus k_{out}$ \cite{bonnetain2019quantum}. 
    \item  Perform Hadamard transform on the first $cn^2$ qubits to uncompute, and query $f$ again such that $|\psi_{f\oplus g}\rangle$ becomes $|\psi_g\rangle$.
    \item  Add the auxiliary qubit $|1\rangle$ and perform Hadamard transformation, then perform Grover iteration, return the result $i, r$ by measuring the second and third register.
    \item   If the hidden shift $s$ is also required, the instance of parallel Simon's algorithm in section 4.1 needs to be applied to obtain the result $s$ by measuring the fifth register.
\end{enumerate}

Next, the \textbf{test} oracle in Figure \ref{fig:GQCOTOSA} can be constructed, and a detailed quantum circuit based on Grover iterations is shown as in Figure \ref{fig:SQCOTOSA}.
\begin{figure}[H]
    \centering
    \includegraphics[scale=0.43]{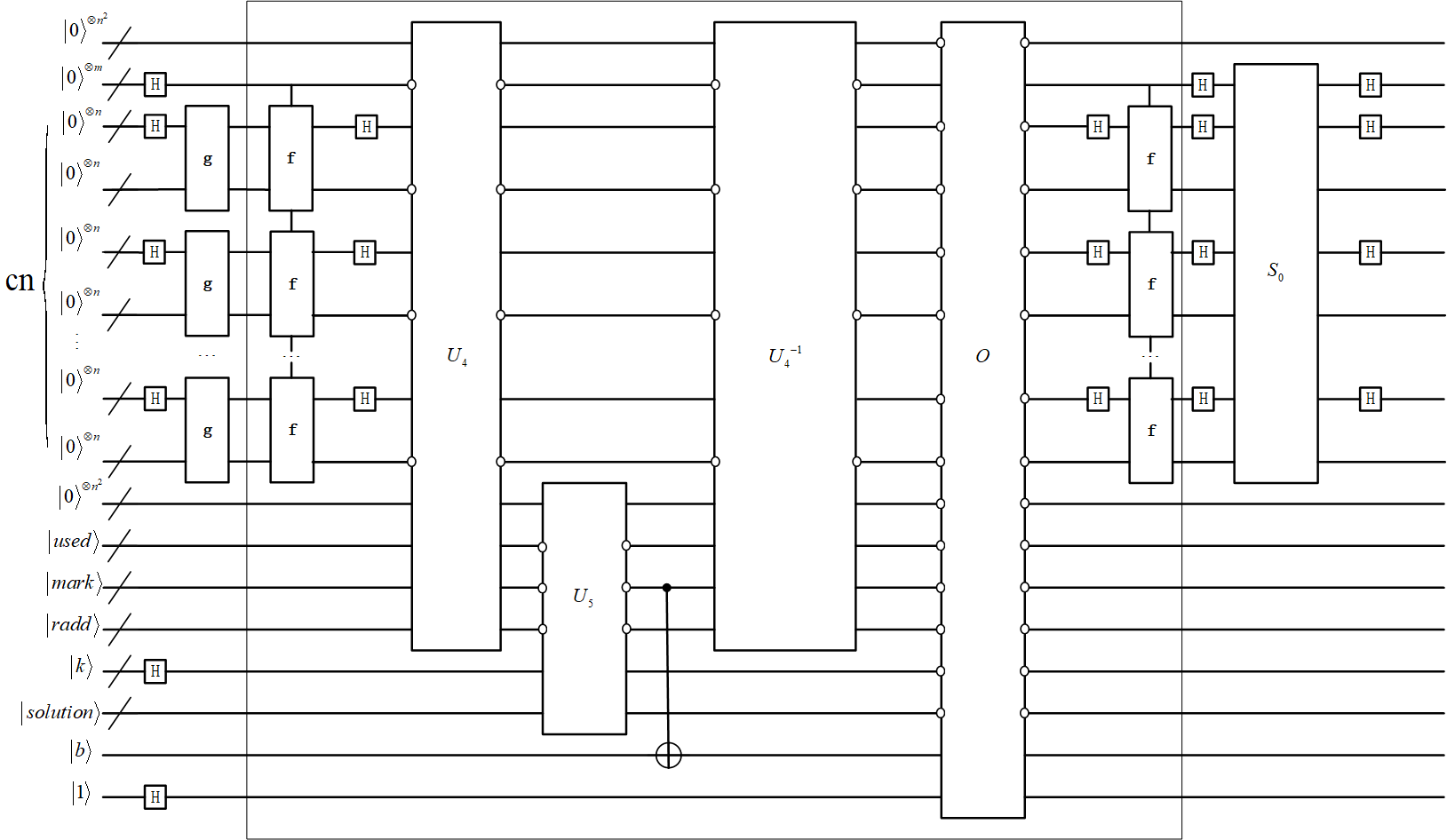}
    \caption{Specific quantum circuit of Alg-PolyQ2 algorithm}
    \label{fig:SQCOTOSA}
\end{figure}
Where we give the specific \textbf{test} oracle construction, $b$ is initialized to 0, and the framed part in Figure \ref{fig:SQCOTOSA} is the \textbf{test} oracle. Before and after applying the \textbf{test}, the register where $|\psi_g\rangle$ should be disentangled.

Similarly, the analysis of the success probability is given in \cite{bonnetain2019quantum}, and we summarize it as the following
theorem:
\begin{theorem}\label{theorem6}
The Alg-PolyQ2 algorithm can query $\mathcal{O}(n)$ times $g$ and $\mathcal{O}(n2^{m/2})$ times $f_i$ to find $i_0$ with probability $\mathcal{ \theta}(1)$, the error produced in each iteration is bounded by the maximum on $i$ of $p^{(i)}=Pr[dim(Span(u_1,\cdots,u_{cn}))<n]$, $i.e.$ the error probability that Simon's algorithm returns $f_i\oplus g$ is a periodic function, but $f_i\oplus g$ is actually not a periodic function is $p^{(i)}\leq 2^ {(n+1)/2}(\frac{3}{4})^{cn/2}$.
\end{theorem}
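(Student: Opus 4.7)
The plan is to split the claim into three independent sub-claims: the $g$-query count, the $f_i$-query count together with the $\theta(1)$ success probability, and the per-iteration error bound $p^{(i)}\leq 2^{(n+1)/2}(3/4)^{cn/2}$. The first is immediate from the algorithm structure: step 3 prepares $|\psi_g\rangle$ with $cn$ parallel queries to $g$, and because the register containing $|\psi_g\rangle$ is disentangled before and after \textbf{test} (this is precisely why Algorithm \ref{Alg:QAFGSARAV} is used in place of Algorithm \ref{Alg:QAFGSAR}), the state $|\psi_g\rangle$ survives every Grover iteration and never needs to be re-prepared. Hence the $g$-query count stays $\mathcal{O}(n)$ no matter how many outer iterations are performed.

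For the $f_i$-query count and constant success probability I would invoke the Quantum Amplitude Amplification theorem of Brassard--H{\o}yer--Mosca--Tapp. The marked pair for the outer Grover search is $(i,r)=(i_0,1)$; by the correctness of Algorithm \ref{Alg:QAFGSARAV} the \textbf{test} oracle returns $r=1$ on $i=i_0$ with probability at least $1-p^{(i_0)}$, and on $i\neq i_0$ only with probability at most $p^{(i)}$. The Grover amplitude of the marked subspace is therefore $\Omega(2^{-m/2})$ up to an additive error controlled by $\sum_{i\neq i_0}p^{(i)}\leq 2^m\cdot\max_i p^{(i)}$, so $\lceil\tfrac{\pi}{4}2^{m/2}\rceil$ iterations suffice. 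Each iteration calls \textbf{test} once, contributing $\mathcal{O}(n)$ queries to $f_i$ from the $cn$ copies in steps 4 and 10, giving the total $\mathcal{O}(n\,2^{m/2})$. Constant success probability then follows provided $2^m\cdot\max_i p^{(i)}\to 0$, which the third sub-claim delivers for a suitable constant $c$.

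The main technical obstacle is the bound $p^{(i)}\leq 2^{(n+1)/2}(3/4)^{cn/2}$ itself. When $f_i\oplus g$ is not periodic, the promise (\ref{equ7}) says that for every nonzero $a\in\mathbb{F}_2^n$, $Pr_x[(f_i\oplus g)(x)=(f_i\oplus g)(x\oplus a)]\leq\tfrac{1}{2}$; pushing this through the Simon-type amplitude calculation analogous to eq.(\ref{equ1}) shows that each measured vector $u_j$ satisfies $Pr[u_j\cdot v=0]\leq\tfrac{3}{4}$ for every fixed nonzero $v\in\mathbb{F}_2^n$. The bad event $\dim\mathrm{Span}(u_1,\ldots,u_{cn})<n$ is equivalent to the existence of some nonzero $v$ with $u_j\cdot v=0$ for all $j$, so a naive union bound over the $2^n-1$ candidate $v$'s yields only $p^{(i)}\leq 2^n(3/4)^{cn}$. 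To reach the sharper $2^{(n+1)/2}(3/4)^{cn/2}$ I would split the $cn$ trials into two halves: use the first $cn/2$ trials to restrict the admissible ``bad direction'' $v$ to a random subspace whose expected size is $\lesssim 2^{n/2}$, and then use the remaining $cn/2$ trials with a union bound only over this restricted collection. The delicate step, and the hardest part of the whole proof, is making this two-stage conditioning rigorous (handling the correlations between the two halves correctly, likely via Cauchy--Schwarz on the indicator $\prod_j \mathbf{1}[u_j\cdot v=0]$) so that the resulting bound is exactly $2^{(n+1)/2}(3/4)^{cn/2}$ rather than the weaker $2^n(3/4)^{cn}$.
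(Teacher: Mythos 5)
Your first two sub-claims are sound and match the intended argument (note that the paper itself offers no proof of Theorem \ref{theorem6}; it explicitly defers to \cite{bonnetain2019quantum} and merely restates the result): the $\mathcal{O}(n)$ queries to $g$ come from preparing $|\psi_g\rangle$ once with $cn$ queries and keeping that register disentangled through every iteration, and the $\mathcal{O}(n2^{m/2})$ queries to $f$ come from $\mathcal{O}(2^{m/2})$ amplitude-amplification iterations, each consuming $\mathcal{O}(n)$ queries inside the \textbf{test} oracle. One small correction there: the error should be accumulated as (number of iterations) times (per-iteration deviation in norm), roughly $2^{m/2}\sqrt{2\max_i p^{(i)}}$, rather than as $2^{m}\max_i p^{(i)}$; the two happen to impose essentially the same threshold on $c$, but the former is the form the theorem's bound is tailored to.

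The genuine gap is in your third sub-claim. The quantity $2^{(n+1)/2}(\frac{3}{4})^{cn/2}$ is not a sharpened probability bound requiring a two-stage conditioning argument; it is exactly $\sqrt{2\cdot 2^{n}(\frac{3}{4})^{cn}}$, the square root of twice the plain union bound you already wrote down. In the cited analysis, the probabilistic content is precisely your union bound $\Pr[\dim\mathrm{Span}(u_1,\dots,u_{cn})<n]\le 2^{n}(\frac{3}{4})^{cn}$, obtained from $\Pr[u_j\cdot v=0]\le \frac{3}{4}$ under promise (\ref{equ7}) and a union over the $2^n-1$ nonzero directions $v$; the halved exponents then appear because what must be controlled in each Grover iteration is the deviation in norm of the state produced by the imperfect \textbf{test} unitary from the ideal one, and an outcome-error probability $p$ translates into a state deviation of at most $\sqrt{2p}$. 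Moreover, read literally as a probability bound, the claimed inequality is weaker than your union bound: whenever $2^{n}(\frac{3}{4})^{cn}\le 2$ one has $2^{n}(\frac{3}{4})^{cn}\le 2^{(n+1)/2}(\frac{3}{4})^{cn/2}$, and when $2^{n}(\frac{3}{4})^{cn}> 2$ the right-hand side exceeds $1$, so $p^{(i)}\le 2^{(n+1)/2}(\frac{3}{4})^{cn/2}$ already follows with no further work. Your proposed split of the $cn$ samples into two halves with a Cauchy--Schwarz/conditioning step is therefore solving a non-problem, and as sketched it would not obviously go through (the surviving bad directions after the first half form a subspace whose size you cannot bound in expectation by $2^{n/2}$ without essentially re-running the union bound); more importantly, it misses the actual role of the square root, which is to convert the union-bound probability into the per-iteration operator-norm error that accumulates over the $\mathcal{O}(2^{m/2})$ iterations.
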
 

Based on Algorithm \ref{Alg:QAFGSARAV}, we can also guarantee the same success probability as the above theorem, or even higher (by choosing the number of parallel Simon's algorithms).

\section{Discussion and Conclusion} \label{Discussion and Conclusion}
\subsection{Discussion}
In this paper, we propose two quantum algorithms \ref{Alg:QAFGSAR} and \ref{Alg:QAFGSARAV} for solving quantum linear equations with coherent superposition. The main difference is that Algorithm \ref{Alg:QAFGSAR} stores the general solution directly in the data register containing the quantum coefficient matrix, and this register is entangled with the solution register. While Algorithm \ref{Alg:QAFGSARAV} has an additional $n(n+1)$ qubits auxiliary register for storing the general solution, and the data register containing quantum coefficient matrix is disentangled before and after input (the data qubits remains unchanged). Therefore, we can select distinct quantum algorithms to apply based on various quantum application scenarios, ensuring the feasibility in different quantum settings.

The number of quantum gates of the proposed quantum algorithms for solving quantum linear systems of equations with coherent superposition is $\mathcal{O}(n^3)$. We guess that the lower bound of the number of quantum gates in solving quantum linear systems of equations with coherent superposition at least $\mathcal{O}(n^3)$, which also shows that our quantum algorithms have reached the optimum. Next,we will give a brief proof.

\begin{proof}(\textbf{Sketch})
We know that solving linear equations in a classical computer is a P problem. Based on the quantum Turing completeness Theorem \ref{theorem3}, we can transform this problem into solving quantum linear equations in polynomial time, and we call it the Q problem. Simon's problem is a BQP problem. Therefore, the problem of solving quantum linear systems of equations can be reduced to Simon's problem. Then, the lower bound of the number of quantum gates for solving the problem of quantum linear systems of equations will not be lower than the quantum gates required by parallel Simon's algorithm.

Similarly, we assume that $f$ is a function that can solve the problem of quantum linear equations. There must be an Oracle machine $M$, such that $M^f$ can solve the parallel Simon's problem. At this time, we cook reduce the parallel Simon's problem to solve the problem of quantum linear equations. Then, the lower bound of the number of quantum gates required by parallel Simon's algorithm will not be lower than the number of quantum gates for solving the problems of quantum linear equations.

To sum up, the lower bound of the number of quantum gates for solving the problem of quantum linear equations should be the same as the number of quantum gates required by parallel Simon's algorithm. However, the number of quantum gates required by parallel $m=\mathcal{O}(n)$ Simon's algorithms is at least $\mathcal{O}(n^2)$, so the lower bound of the number of quantum gates for solving the problem of quantum linear systems of equations is at least $\mathcal{O}(n^2)$.

The main idea of our algorithms is still based on Gaussian-Jordan elimination, which needs to traverse the elements of each row and column and perform transformation operations (this process requires $n$ Toffoli gates). If the lower bound of the required number of quantum gates is $\mathcal{O}(n^2)$, it means that only the row traversal or column traversal is performed, and the entire algorithm process cannot be completed. Therefore, the number of quantum gates of a quantum algorithm for solving a quantum linear system of equations is at least $\mathcal{O}(n^3)$.
\end{proof}

The CNOT gate can be seen as the key to quantum computers, and multiple CNOT gates in an ion trap quantum computer cannot be operated in parallel, but they can only be performed in series. We briefly discuss the applicability of our quantum algorithms based on ion trap computers. An ion trap quantum computer's effective working time (i.e., decoherence time) is no more than $10^3$ seconds, currently 600 seconds \cite{wang2017single}. And it can only handle CNOT gates of the order of $10^2$ in one error-correction period \cite{yang2020effect}. The time of performing a CNOT gate in an ion trap quantum computer is about $2.85 \times 10^{-4}s$ \cite{yang2013post}. Our algorithms give a specific construction for solving quantum linear equations with coherent superposition. If there are $m=\Uptheta(cn)$ quantum linear equations with coherent superposition, the number of CNOT gates reaches $\mathcal{O}((12c+3)n^3)$ after decomposing Toffoli gates into CNOT gates based on section \ref{Algorithms}. Parallel computation of multiple quantum computers still faces many technical and implementation challenges. To complete the attack within a meaningful time, our algorithms should be mainly applied to many attacks against lightweight symmetric cipher constructions (e.g., the Three-round Feistel scheme, Even-Mansour Construction \cite{kuwakado2010quantum}, FX Construction \cite{leander2017grover}, and so on). It can be applied to the detailed construction of the black box that needs to solve the rank of quantum linear equations, such as DESX \cite{kilian1996protect} (a 64-bit state, two 64-bit whitening keys, and a 56-bit internal key), PRINCE \cite{borghoff2012prince}, and PRIDE \cite{albrecht2014block} (one 64-bit state, two 64-bit whitening keys, and one 64-bit internal key).

\subsection{Conclusion}
We first give the definition of quantum linear equations with coherent superposition in detail, then we propose two quantum algorithms for solving quantum linear equations with coherent superposition, and construct their specific quantum circuits. This is a work that has never been done before. According to the quantum Turing completeness Theorem \ref{theorem3}, a quantum computer can theoretically simulate the calculation process of any classical Turing machine. Therefore, we analyze the number of quantum gates simulating the classical Gaussian-Jordan elimination is about $\mathcal{O}(n^4)$. However, the number of quantum gates of applying Algorithm \ref{Alg:QAFGSAR} and Algorithm \ref{Alg:QAFGSARAV} can be reduced to $\mathcal{O}(n^3)$, then we prove that it reaches optimality. Moreover, our quantum algorithms can be applied in different quantum settings. All solutions of the quantum linear systems of equations in the quantum setting can be obtained by measuring the storage register (may be a data register or auxiliary register) once.

We also apply the proposed algorithms to parallel Simon's algorithm \cite{simon1997power} (including with multiple periods), Grover Meets Simon algorithm \cite{leander2017grover} and Alg-PolyQ2 algorithm \cite{bonnetain2019quantum}, under the condition of satisfying the same success probability (even higher by choosing the number of parallel Simon's algorithms), the solutions of the problem can be obtained by one measurement. In addition, based on the proposed quantum algorithms, we reconstruct the classifier black box $\mathcal{B}$ (in Grover Meets Simon algorithm) as a new quantum classifier $\mathcal{S_B}$ and construct the \textbf{test} oracle (the original article does not introduce how to compute the rank in Alg-PolyQ2 algorithm, we can apply Algorithm \ref{Alg:QAFGSARAV} to compute the rank), then construct their quantum circuits in detail, including specific quantum circuits of the three algorithms. Finally, we discuss our algorithms are applicable to lightweight cryptographic attacks during the effective working time of an ion trap quantum computer.

\section*{FUNDING}
This work was supported by Beijing Natural Science Foundation (Grant No.4234084).

\bibliographystyle{unsrt}
\bibliography{ref}

\section*{Appendix} \label{Appendix}
\subsection*{Gaussian-Jordan Elimination over Binary Fields}

We first give a lemma about the relationship between the rank of the matrix and the number of the basic solution vectors, so as to understand the Gaussian-Jordan elimination algorithm over binary fields.
\begin{lemma}\label{lemma3}
\textbf{(Rank of matrix over binary fields)} Given a $m \times n$ matrix $A$ over binary fields, then its rank $r$ equals the number $n$ of unknown variables minus the number $k$ of the basic solution vectors.
\end{lemma}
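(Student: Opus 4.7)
The plan is to establish the identity $r=n-k$ by following the standard rank--nullity argument, adapted to the binary field $\mathbb{F}_2$, and tying it directly to the structure produced by Gaussian--Jordan elimination as already described in the paper.

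First, I would reduce $A$ to row echelon (or row reduced) form using Gaussian--Jordan elimination over $\mathbb{F}_2$. Since elementary row operations preserve both the rank and the solution set of the associated homogeneous system $Ax=0$, I can work with the reduced matrix throughout. By the definition of rank, the reduced matrix has exactly $r$ non-zero rows, each containing a pivot $1$; these pivots lie in $r$ distinct columns, which I call the pivot columns. The remaining $n-r$ columns correspond to free variables. The key observation is that once a value in $\mathbb{F}_2$ is assigned to each of the $n-r$ free variables, the values of the $r$ pivot variables are uniquely determined by back-substitution.

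Next, I would construct the basic solution vectors explicitly. For each free variable $x_{j_t}$ ($t=1,\dots,n-r$), set $x_{j_t}=1$, set all other free variables to $0$, and solve for the pivot variables to obtain a solution vector $\eta_t$. This yields $n-r$ solutions of $Ax=0$. Linear independence is immediate: in the $t$-th coordinate corresponding to free variable $x_{j_t}$, only $\eta_t$ has a $1$ while every other $\eta_{t'}$ has a $0$, so any $\mathbb{F}_2$-linear combination equal to the zero vector forces each coefficient to be $0$. To show they span the solution space, take an arbitrary solution $x$ and form $x' = x \oplus \bigoplus_{t: x_{j_t}=1}\eta_t$; then $x'$ is a solution whose free coordinates are all $0$, and by back-substitution (uniqueness of the pivot-variable values given the free-variable values) $x'=0$. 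Hence $x$ is an $\mathbb{F}_2$-combination of $\eta_1,\dots,\eta_{n-r}$, so $\{\eta_1,\dots,\eta_{n-r}\}$ is a basic solution system and $k=n-r$, i.e., $r=n-k$.

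There is no real obstacle here; the only subtlety is that the argument must be phrased over $\mathbb{F}_2$ rather than an arbitrary field, which matters for two small points: the assignment ``set one free variable to $1$'' uses the fact that $1$ is the unique non-zero scalar, and Theorem \ref{theorem2} in the text is consistent with this count since the solution space $\{0,1\}^k$ indeed has $2^k$ elements when there are $k$ independent basic solution vectors. Thus the only care needed is to verify the independence and spanning steps explicitly, after which the equality $r=n-k$ follows from counting the free columns produced by Gaussian--Jordan elimination.
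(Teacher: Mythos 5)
Your proof is correct and takes essentially the same route as the paper: the rank $r$ is the number of pivot columns produced by Gaussian--Jordan elimination, the basic solution vectors correspond to the $n-r$ free columns, hence $r=n-k$. You additionally verify the linear independence and spanning of the explicitly constructed basic solution system, steps the paper's proof only asserts, so your version fills in detail without changing the underlying argument.
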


\begin{proof}
Let $r$ be the rank of $A$, and $k$ be the number of basic solution vectors.

$r$ equals the dimension of the column space of $A$, i.e., the maximum number of linearly independent column vectors of $A$. The matrix $A$ has $n$ columns in total, and the number of free vector columns is $n-r$, so the rank of $A$ is equal to $r$, the number of basic solution vectors (consists of free vector columns) is $n-r$. Therefore, the rank $r$ of the matrix $A$ is equal to the number $n$ of unknown variables  minus the number $k$ of basic solution vectors , i.e., $r=n-k$.
\end{proof}

According to Lemma \ref{lemma3}, we can know that given a $m \times n$ matrix $A$ over binary fields, when $rank(A)=r$, the number of basis vectors of the solution is $n-r$, then its coefficient augmented matrix must be transformed into the following form through Gaussian-Jordan elimination method and row-column transformation:
\begin{align*}
    [A|b]\stackrel{\text{Gaussian-Jordan elimination}}{\longrightarrow} 
\begin{bmatrix} 1 & \dots & 0 &a'_{1,r+1}&\dots& a'_{1,n} & b'_1\\ 
\vdots &\ddots & \vdots & \vdots & \vdots& \vdots& \vdots \\
0 & \dots & 1 & a'_{r,r+1} &\dots & a'_{r,n} & b'_r\\
0 & \dots& 0 & 0 &\dots& 0 & 0 \\
\vdots & \ddots & \vdots & \vdots & \vdots &\vdots & \vdots\\
0 & \dots& 0 & 0 &\dots& 0 & 0 
\end{bmatrix}
,a_{ij}' \in \mathbb{F}_2.
\end{align*}

The $(r+1)$-th to $n$-th columns are the columns where the free variables are located, then the basic solution system of the linear equations is:
\begin{equation*}
    \eta_1=\begin{bmatrix}a'_{1,r+1}\\ \vdots \\ a'_{r,r+1}\\ 1\\0\\ \vdots \\ 0\end{bmatrix},\eta_2=\begin{bmatrix}a'_{1,r+2}\\ \vdots \\ a'_{r,r+2}\\ 0\\1\\ \vdots \\ 0\end{bmatrix},\dots ,\eta_{n-r}=\begin{bmatrix}a'_{1,n}\\ \vdots \\ a'_{r,n}\\ 0\\0\\ \vdots \\ 1\end{bmatrix}
\end{equation*}

The special solution of the linear equations is $x_0=[b'_1,\dots, b'_r,0,\dots,0]^T$. We can obtain the general solution is $x=x_0+k_1\eta_1+\dots+k_{n-r}\eta_{n-r}$, where $k_1,\dots,k_{n-r} \in \mathbb{F}_2$.

Since the above form has undergone a column transformation, the order of the variable vectors corresponding to the special solution and the basic solution system is not sequential. When the general solution is given, the order of the variable vectors still needs to be adjusted. Algorithm \ref{Alg:GEFGS} gives the method for solving the general solution of the linear equations without column transformation, and the order of the variable vectors does not change at this time.

The row echelon matrix is the middle matrix form of the elimination step in the Gaussian elimination method, which is also a critical step. The row reduced form matrix is the final matrix form after the Gauss-Jordan elimination operation. Based on the two forms, after performing elementary transformations on augmented matrices, we give a Gaussian-Jordan elimination algorithm \ref{Alg:GEFGS} for the general solutions of linear equations over binary fields.

\begin{algorithm}[H]
		\caption{\textbf{Gaussian-Jordan elimination for general solution and rank}}
		\label{Alg:GEFGS}
		\begin{algorithmic}[1]
			\Require 
		    $[A|b]$: coefficient augmented matrix belongs to $\mathbb{F}^{m\times(n+1)}_2$, where $A$ is a $m\times n$ matrix 
			\Ensure  the value of a vector $\vec{x}$ such at $A\vec{x}=\vec{b}$
            \State   \textbf{if} $m<n$ \textbf{then}
			\State \ \ \ \  Pivot=list[ ];$x_0$=list[ ];
			\\ \ \ \ \ $\eta$=list[$\eta_1[0^{n-m}     ],\cdots,\eta_{n}[0^{n-m}]$];
			\State   \textbf{else} 
			\State \ \ \ \ Pivot=list[ ];$x_0$=list[ ];$\eta$=list[$\eta_1$[  ],$\cdots,\eta_{n}$[ ]];
			\State   \textbf{for}  $i \leftarrow 1$ $\textbf{to}$ $m$ \textbf{do}
			\State \ \ \ \   \textbf{for} $j\leftarrow 1 $     $\textbf{to}$ $n$ \textbf{do}
			\State \ \ \ \ \ \ \ \      \textbf{if} $\forall_{1\leq t<j}$ $a_{ik}==0$ \textbf{then}
			\State \ \ \ \ \ \ \ \ \ \ \ \   Pivot.append(j)
			\State \ \ \ \ \ \ \ \ \ \ \ \   \textbf{for} $\ell \leftarrow 1$  $\textbf{to}$ $i-1$ and $\ell \leftarrow i+1$  $\textbf{to}$ $m$
			\State \ \ \ \ \ \ \ \ \ \ \ \ \ \ \ \  \textbf{if} $a_{\ell j}=1$ \textbf{then}
		    \State \ \ \ \ \ \ \ \ \ \ \ \ \ \ \ \ \ \ \ \    $a_{\ell:}=a_{i:}\oplus a_{\ell:}$; $b_{\ell}=b_{i}\oplus b_{\ell}$;
		    \State \ \ \ \ \ \ \ \ \ \ \ \  $swap(a_{i:};a_{j:})$; $swap(b_{i};b_{j})$;
		    \State  \textbf{for}  $j \leftarrow 1$ $\textbf{to}$ $n$  \textbf{do}
			\State \ \ \ \      \textbf{if} $j$ in Pivot \textbf{then}
			\State \ \ \ \ \ \ \ \   $x_0.append(b_j)$;
			\State \ \ \ \ \ \ \ \   $\eta_j.append(0^{\min(m,n)})$;
			\State \ \ \ \       \textbf{else} 
			\State \ \ \ \ \ \ \ \       $x_0.append(0)$
			\State \ \ \ \ \ \ \ \       \textbf{for}  $i \leftarrow min(m,n)$ $\textbf{to}$ $1$  \textbf{do}
			\State \ \ \ \ \ \ \ \ \ \ \ \    $\eta_j$.insert(0,$a_{ij}$);
			\State \ \ \ \ \ \ \ \ \ \ \ \      $\eta_j[j]=1$;
			\State  \textbf{return} $x=x_0+k_1\eta_1+\cdots+k_{n}\eta_{n}$; $rank(A)=len(\text{Pivot})$
		\end{algorithmic}
\end{algorithm}

Brief description of Algorithm \ref{Alg:GEFGS}: Initialization the list, including Pivot, $x_0$, and base solution system $\eta$ list. If $m<n$, use the all-zero elements to complement the list $\eta$ into a $n\times n$ matrix; if $m \geq n$, the matrix remains unchanged. First traverse by row and then traverse by column, if $a_{ij}$ is the first element with 1 in the $i$-th row, mark it as the pivot and store the column index $j$ in the list Pivot. Use the $i$-th row to eliminate the row where the element 1 is located in the $j$-th column. Swap the $j$-th row and the $i$-th row so that all pivots are on the main diagonal. After traversing all the rows and columns, at this time, Pivot stores the index $j$ of the columns where all the pivots are located in row order. Afterwards, continue to traverse by column, if $j$ is in the list Pivot, it means that there is a pivot element in this column, then traverse by row, find the pivot, swap the $i$-th row and the $j$-th row such that the pivot is on the main diagonal, and store the special solution $b_j$ corresponding to $x_j$ in the list $x_0$ at this time, otherwise add zero to the list $x_0$. If $j$ is not in the list Pivot, add the elements in this column to the list $\eta_j$ in reverse order, and assign the $j$-th element in $\eta_j$ to 1. Finally, we can obtain the general solution $x=x_0+k_1\eta_1+\cdots+k_{n}\eta_{n}$, and the rank is the length of the list Pivot.

\subsection*{A Proof of Theorem \ref{theorem4}}
\begin{proof}
Given a lemma in \cite{kaplan2016breaking}, as follows:
\begin{lemma}\label{lemma4}
For $a\in \{0,1 \}^n$, consider function $g(x)=2^{-n}\sum_{y\in a^{\perp}}(-1)^{x\cdot y}$, Where, $a^{\perp}=\{y\in \{0,1\}^n, s.t. y\cdot a=0\}$. For $\forall$ $x$, satisfy
\begin{equation}
\label{equ8}
    g(x)=\frac{1}{2}(\delta_{x,0}+\delta_{x,a})
\end{equation}
Where, $\delta_{x,t}$ means 1 when $x=t$, otherwise 0.
\end{lemma}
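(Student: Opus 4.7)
The plan is to evaluate the character sum $\sum_{y\in a^{\perp}}(-1)^{x\cdot y}$ by a case analysis on the linear functional $\varphi_x\colon y\mapsto x\cdot y$ restricted to the subgroup $a^{\perp}\subseteq\mathbb{F}_2^n$, invoking the standard orthogonality fact that a nontrivial character on a finite abelian group sums to zero, while the trivial character sums to the group size. First I would dispose of the degenerate case $a=0$: here $a^{\perp}=\{0,1\}^n$, so $g(x)=2^{-n}\sum_{y\in\{0,1\}^n}(-1)^{x\cdot y}=\delta_{x,0}$, which agrees with $\tfrac{1}{2}(\delta_{x,0}+\delta_{x,a})=\delta_{x,0}$ since $a=0$.

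For $a\neq 0$, the map $y\mapsto a\cdot y$ is a nonzero $\mathbb{F}_2$-linear functional, so $a^{\perp}$ is an $(n-1)$-dimensional subspace of size $2^{n-1}$. I would then split on whether $\varphi_x|_{a^{\perp}}$ vanishes identically. If $\varphi_x|_{a^{\perp}}\equiv 0$, every summand is $+1$, the sum equals $2^{n-1}$, and $g(x)=1/2$; by the nondegeneracy of the dot-product pairing on $\mathbb{F}_2^n$ together with the identity $(a^{\perp})^{\perp}=\mathrm{span}(a)=\{0,a\}$, this occurs exactly when $x\in\{0,a\}$. Otherwise $\varphi_x|_{a^{\perp}}$ is surjective onto $\mathbb{F}_2$, its two fibers in $a^{\perp}$ each have size $2^{n-2}$, and the sum collapses to $2^{n-2}-2^{n-2}=0$, giving $g(x)=0$. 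Combining the two subcases yields $g(x)=\tfrac{1}{2}(\delta_{x,0}+\delta_{x,a})$, which is the claim.

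The only mildly subtle point, and the one I would want to state carefully, is the double-orthogonal identity $(a^{\perp})^{\perp}=\mathrm{span}(a)$. The inclusion $\mathrm{span}(a)\subseteq(a^{\perp})^{\perp}$ is immediate from the definition, and the reverse inclusion follows from a dimension count: $\dim(a^{\perp})^{\perp}=n-\dim(a^{\perp})=1$ for $a\neq 0$, by nondegeneracy of the bilinear form on $\mathbb{F}_2^n$. No genuinely hard step arises; the entire lemma reduces to one clean application of character orthogonality on the $(n-1)$-dimensional subgroup $a^{\perp}$.
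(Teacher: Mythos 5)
Your proof is correct. Note that the paper itself does not prove this lemma at all: it is quoted from Kaplan et al.\ \cite{kaplan2016breaking} and then used as a black box inside the proof of Theorem \ref{theorem4}, so there is no internal argument to compare against; your write-up supplies the missing self-contained justification. The route you take is the standard one and is sound: for $a\neq 0$ the set $a^{\perp}$ is an $(n-1)$-dimensional subspace of size $2^{n-1}$, the restricted functional $y\mapsto x\cdot y$ is either identically zero (sum $=2^{n-1}$, so $g(x)=\tfrac12$) or surjective with equal fibers of size $2^{n-2}$ (sum $=0$), and the identification of the first case with $x\in\{0,a\}$ via $(a^{\perp})^{\perp}=\mathrm{span}(a)$ is exactly the point that needs the nondegeneracy of the dot product on $\mathbb{F}_2^n$ together with the dimension count $\dim(a^{\perp})^{\perp}=n-\dim a^{\perp}=1$; you state this carefully, which is appropriate since over $\mathbb{F}_2$ the form is nondegenerate but not anisotropic. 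Your separate treatment of the degenerate case $a=0$, where both deltas coincide and $g(x)=\delta_{x,0}=\tfrac12(\delta_{x,0}+\delta_{x,a})$, is also right and is a case the lemma's statement formally allows even though the application in Theorem \ref{theorem4} only uses $a\notin\{0,s\}$. No gaps.
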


Now we start to prove Theorem \ref{theorem4}. If there are $m$ Simon's algorithms in parallel, $m$ vectors $y_1,\cdots,y_m$ can be obtained. Under the promise that the Simon's algorithm is satisfied, these $m$ vectors and $s$ are orthogonally spanned into a $n-1$ dimensional linear space. However, if the spanned space is less than n-1 dimensional, then $s$ can still be efficiently recovered by testing all vectors that are orthogonal to the subspace. Therefore, the probability of not recovering $s$ correctly is $Pr[dim(Span(y_1,\cdots,y_m))\leq n-2]$, as follows:
\begin{equation*}
    \begin{aligned}
        &Pr[y\cdot a=0]\\
        &=\parallel 2^{-n} \sum_{\substack{y\in\{0,1\}^n \\ s.t. y\cdot t=0}}|y\rangle \sum_{x\in\{0,1\}^n}(-1)^{x\cdot y}|f(x)\rangle\parallel^2 \\
        &=2^{-2n}\sum_{\substack{y\in\{0,1\}^n \\ s.t. y\cdot t=0}}\sum_{x,x'\in\{0,1\}^n}(-1)^{(x\oplus x')\cdot y}\langle f(x')|f(x)\rangle\\
        &\overset{\text{eq.(\ref{equ8})}}{=}2^{-2n}\sum_{x,x'\in\{0,1\}^n}\langle f(x')|f(x)\rangle 2^{n-1}(\delta_{x,x'}+\delta_{x,x'\oplus a})  \\
        &=2^{-(n+1)}\left[\sum_{x\in\{0,1\}^n}\langle f(x)|f(x)\rangle + \sum_{x\in\{0,1\}^n}\langle f(x\oplus a)|f(x)\rangle         \right]\\
        &=\frac{1}{2}[1+Pr[f(x)=f(x\oplus a)]]\\
        &\leq \frac{1}{2}(1+\varepsilon(f))\\
        &\leq \frac{1}{2}(1+p_0)
    \end{aligned}
\end{equation*}

Thus, the probability that $s$ cannot be recovered correctly, that is, the failure probability is
\begin{equation*}
    \begin{aligned}
        &Pr[dim(Span(y_1,\cdots,y_m))\leq n-2]\\
        &\leq Pr[\exists a \in \{0,1\}^n \backslash \{0,s\} \ s.t. y_1\cdot a=\cdots=y_m\cdot a=0]\\
        &\leq \sum_{a \in \{0,1\}^n \backslash \{0,s\}}Pr[y_1\cdot a=\cdots=y_m\cdot a=0]\\
        &\leq \sum_{a \in \{0,1\}^n \backslash \{0,s\}}(Pr[y_1\cdot a=0])^m\\
        &\leq \max_{a \in \{0,1\}^n \backslash \{0,s\}}2^n (Pr[y_1\cdot a=0])^m\\
        &\leq 2^n(\frac{1+\varepsilon(f)}{2})^m\\
        &\leq 2^n(\frac{1+p_0}{2})^m
    \end{aligned}
\end{equation*}

Therefore, the success probability of recovering the correct period $s$ by $m$ Simon's algorithms in parallel is at least $1-2^n(\frac{1+p_0}{2})^m$.
\end{proof}

\end{document}